\newcommand{\lv}[1]{}
\newcommand{\E}{{\bf E}}
\newcommand{\A}{{\bf A}}
\newcommand{\B}{{\bf B}}
\newcommand{\pr}{\mathbf{Pr}}
\newcommand{\eps}{{\varepsilon}}
\newcommand{\etal}{{\it et al. }}
\newcommand{\nl}{\bot}
\newcommand{\veps}{\varepsilon}
\newcommand{\Cst}{{C^\star}}
\newcommand{\ti}{{\bar i}}
\newcommand{\ostp}[1]{{O_{#1}^{'}}}
\newcommand{\R}{\mathbb{R}}
\newcommand{\Xst}{X^{\star}}
\newcommand{\cst}{c^{\star}}
\newcommand{\OPT}{OPT^{\star}}
\newcommand{\norm}[1]{\lVert #1 \rVert}
\newcommand{\ham}{\mathcal{H}}
\begin{document}

\title{Streaming {PTAS} for Binary $\ell_0$-Low Rank Approximation}
%
%
\author{Anup Bhattacharya\inst{1} \and Dishant Goyal\inst{2} \and Ragesh Jaiswal\inst{2}\thanks{This work was done while the author was on a sabbatical from IIT Delhi and visiting UC San Diego.} \and Amit Kumar\inst{2}}
%
%
%
\institute{
Indian Statistical Institute Kolkata.\thanks{Email address: \email{bhattacharya.anup@gmail.com}} \and
Department of Computer Science and Engineering, \\
Indian Institute of Technology Delhi.\thanks{Email addresses: \email{\{Dishant.Goyal, rjaiswal, amitk\}@cse.iitd.ac.in}}
}
{\def\addcontentsline#1#2#3{}\maketitle}

\begin{abstract}
We give a 3-pass, polylog-space streaming PTAS for the constrained binary $k$-means problem and a 4-pass, polylog-space streaming PTAS for the binary $\ell_0$-low rank approximation problem. 
The connection between the above two problems has recently been studied.
We design a streaming PTAS for the former and use this connection  to obtain streaming PTAS for the latter. 
This is the first constant pass, polylog-space streaming algorithm for either of the two problems. 
\end{abstract}




\pagestyle{plain}
\setcounter{page}{1}

\section{Introduction}
Low rank approximation is a common data analysis problem that has various applications in computer science.
The most general version of the problem, the $\ell_p$-low rank approximation problem, is defined in the following manner: 

\begin{quote}
\underline{\it $\ell_p$-low rank approximation}: Given a matrix $\A \in \R^{n \times d}$ (with $n \geq d$) and an integer $r$, find a rank-$r$ matrix $\B \in \R^{n \times d}$ such that $\norm{\A - \B}_p^p \equiv \sum_{i, j}|\A_{i, j} - \B_{i, j}|^{p}$ is minimised.
\end{quote}
The above definition is for any positive value of $p$. When $p =0$, the objective is to minimise $\norm{\A - \B}_0$ which is defined to be the number of mis-matches in the matrices $\A$ and $\B$. 
The $\ell_p$-low rank approximation problem is known to be $\mathsf{NP}$-hard for $p \in \{0, 1\}$ while for $p=2$ the problem can be solved using SVD (Singular Value Decomposition). 
Even though efficient approximation algorithms for these problems have been known, the approximation factor is large (polynomial in $r$). 
Recent work of Ban \etal~\cite{ban} addressed the open question about whether a PTAS is possible for these problems.
They showed that for $p \in (1, 2)$ there is no constant factor approximation algorithm running in time $2^{r^{\delta}}$ for a constant $\delta > 0$ under the small set expansion hypothesis and exponential time hypothesis (ETH). 
This shows that an exponential dependence on $r$ is necessary. 
On the upper bound side, they give a $(1+\veps)$-approximation algorithm with running time $n^{poly(\frac{r}{\veps})}$ for the cases when $p \in (0, 2)$.

The case when $p=0$ has also been studied. 
There exists $poly(r \log{n})$ bicriteria approximation algorithm for the $\ell_0$-low rank approximation problem~\cite{bkw17}. 
The problem can alternatively be stated as: given an $n\times d$ matrix $\A$, find an $n \times r$ matrix $\mathbf{U}$ and a $r \times d$ matrix $\mathbf{V}$ such that $\norm{\A - \mathbf{U} \cdot \mathbf{V}}_0$ is minimised. 
There is an interest in specific class of instances of the $\ell_0$-low rank approximation problem where the matrices $\A, \mathbf{U}, \mathbf{V}$  in the above formulation are binary matrices. 
In fact, we can generalise even further by making the notion of $\mathbf{U} \cdot \mathbf{V}$ in the above definition more flexible in the following manner: If $\A' = \mathbf{U} \cdot \mathbf{V}$, then $\A'_{ij}$ is the inner product of the $i^{th}$ row of $\mathbf{U}$ and the $j^{th}$ column of $\mathbf{V}$. We can consider various fields for this inner product. The two popularly explored fields are: (i) $\mathbb{F}_2$ with inner product defined as $\langle x, y\rangle \equiv \oplus_i (x_i \cdot y_i)$, and (ii) Boolean semiring $\{0, 1, \wedge, \vee\}$ with inner product defined as $\langle x, y\rangle \equiv \vee_i (x_i \wedge y_i) = 1 - \prod_i (1 - x_i \cdot y_i)$. There are various previous works that consider the above specific versions. 
We can generalise the problem (using the formulation in terms of $\mathbf{U}$ and $\mathbf{V}$) so that the above versions become special cases. 
This was done by Ban \etal~\cite{ban} and they called this problem  {\em generalised binary $\ell_0$-rank-$r$ problem} that is defined below.

\begin{quote}
\underline{\it Generalised binary $\ell_0$-rank-$r$}: Given a matrix $\A \in \{0, 1\}^{n \times d}$ with $n \geq d$, an integer $r$, and an inner product function $\langle.,. \rangle: \{0, 1\}^r \times \{0, 1\}^r \rightarrow \{0, 1\}$, find matrices $\mathbf{U} \in \{0, 1\}^{n \times r}$ and $\mathbf{V} \in \{0, 1\}^{r \times d}$ that minimises $\norm{\A - \mathbf{U} \cdot \mathbf{V}}_0$, where $\mathbf{U} \cdot \mathbf{V}$ is computed using the inner product function. That is $[\mathbf{U} \cdot \mathbf{V}]_{ij}$ is the inner product of the $i^{th}$ row of $\mathbf{U}$ with the $j^{th}$ column of $\mathbf{V}$.
\end{quote}
Ban \etal~\cite{ban} showed that there is no approximation algorithm for the generalised binary $\ell_0$-rank-$r$ problem running in time $2^{2^{\delta r}}$ for a constant $\delta > 0$.
The work of Ban \etal~\cite{ban} and Fomin \etal~\cite{fomin18} addressed one of the main open questions for generalised binary $\ell_0$ rank-$r$ problem -- whether a PTAS for constant $r$ is possible.
They give such a PTAS using very similar set of ideas (even though they were obtained independently).
We extend the previous work of Ban \etal and Fomin \etal to the streaming setting by using the connection of this problem to the constrained binary $k$-means problem. 
This connection was given and used by both Ban \etal~\cite{ban} and by Fomin \etal~\cite{fomin18}. 
We will now talk about the constrained binary $k$-means problem and its connection to generalised binary $\ell_0$ rank-$r$  problem.
We will start with the binary $k$-means problem which is an interesting problem on its own.
\begin{quote}
\underline{\it Binary $k$-means}: Given a set of points $X \subseteq \{0, 1\}^d$ and a positive integer $k$, find a set of $k$ centers $C \subseteq \{0, 1\}^d$ such that the $k$-means cost function $\Phi(C, X) \equiv \sum_{x \in X} \min_{c \in C} \norm{x - c}_2^2$ is minimised.
\end{quote}
Note that $C$ and $X$ are restricted to be from the set $\{0, 1\}^d$, the elements of which can alternatively be interpreted as $d$-bit strings. 
So, the squared Euclidean distance between any two points $x, y \in \{0, 1\}^d$ that is used in the $k$-means cost function can alternatively be written as $\norm{x - y}_2^2 = \ham(x, y)$, where $\ham(x, y)$ denotes the {\em Hamming distance} between strings $x$ and $y$\footnote{The hamming distance between two binary strings of equal length is the number of bits on which they differ.}. 
Note that $(\{0, 1\}^d, \ham)$ is a metric which means, in particular, that the distance function $\ham$ over the domain $\{0, 1\}^d$ satisfies the triangle inequality:
\begin{equation}\label{eqn:triangle}
\forall x, y, z \in \{0, 1\}^d, \ \ \ham(x, z) \leq \ham(x, y) + \ham(y, z).
\end{equation}
The binary $k$-means problem has been examined in the past by Kleinberg, Papadimitriou, and Raghavan~\cite{kpr04}, Ostrovsky and Rabani~\cite{or02}, and Alon and Sudakov~\cite{as99}\footnote{Alon and Sudakov~\cite{as99} consider the dual maximization problem that involves maximising $nd - \Phi(C, X)$ instead of minimizing $\Phi(C, X)$.}.
Ostrovsky and Rabani~\cite{or02} gave a $(1+\veps)$-approximation algorithm with running time $n^{f(k, \veps)}$ for some function $f$.
Fomin \etal~\cite{fomin18} gave an EPTAS (under the assumption that $k$ is a constant) with running time $f(k, \veps) \cdot n^{O(1)}$ for some function $f$. 
In fact, Fomin \etal~\cite{fomin18} gave an EPTAS for a more generalised version of the binary $k$-means problem that they call the {\em constrained binary $k$-means problem} and the EPTAS for the binary $k$-means problem trivially follows from this. We talk about this problem next.

We will work with the definition of the constrained binary $k$-means problem given by Fomin \etal~\cite{fomin18}. 
For this, we first need to define the concept of a set of $k$ centers $C \subseteq \{0, 1\}^d$ satisfying a set of $k$-ary relations. 
Given a set $\mathcal{R} = \{R_1, ..., R_d\}$ of $n$, $k$-ary binary relations (i.e., $R_i \subseteq \{0, 1\}^k$ for every $i$), a set $C = \{c_1, ..., c_k\} \subseteq \{0, 1\}^d$ of $k$ centers is said to satisfy $\mathcal{R}$ iff $(c_1[i], ..., c_k[i]) \in R_i$ for every $i = 1, ..., d$. Here, $c_j \in \{0, 1\}^d$ is thought of as a $d$-dimensional vector and $c_j[i]$ denotes the $i^{th}$ coordinate of this vector.
We can now define the constrained binary $k$-means problem.

\begin{quote}
\underline{\it Constrained binary $k$-means}:
Given a set of $n$ points $X \subseteq \{0, 1\}^d$, a positive integer $k$, and a set of $k$-ary relation s $\mathcal{R} = \{R_1, ..., R_d\}$, find a set of $k$ centers $C \subseteq \{0, 1\}^d$ satisfying $\mathcal{R}$ such that the cost function $\Phi(C, X) \equiv \sum_{x \in X} \min_{c \in C} \norm{x - c}_2^2 = \sum_{x \in X} \min_{c \in C} \ham(x, c)$ is minimised.
\end{quote}
Note that the distance between point $x$ and center $c$ is given as $\norm{x - c}_2^2$ as opposed to $\left( \sum_{i=1}^d |x[i] - c[i]| \right)$ by Fomin \etal~\cite{fomin18}. 
However, the formulations are equivalent since the distances are the same when $x, c \in \{0, 1\}^d$ are binary vectors.
It is important to distinguish between the definition of constrained binary $k$-means problem given above with the {\em constrained $k$-means problem} that has been a known problem in the $k$-means clustering literature.

\paragraph{Comparison with constrained $k$-means problem} Ding and Xu~\cite{dx15} gave a unified framework for constrained versions of the $k$-means problem. 
As per this framework, a constrained version of the $k$-means problem is defined by an input set $X$, a positive integer $k$, and a set of constraints {\em on the clusters}. 
The goal is to output a clustering of the dataset $X \subseteq \R^d$ satisfying the cluster constraints such that the $k$-means cost function with respect to centers defined by the centroid of clusters, is minimised. 
That is, find a clustering $\{X_1, ..., X_k\}$ satisfying the cluster constraints such that the following cost function gets minimised: $\Psi(X_1, ..., X_k) \equiv \sum_{i=1}^{k} \Phi(\mu(X_i), X_i)$, where $\mu(X_i) \equiv \frac{\sum_{x \in X_i} x}{|X_i|}$ is the centroid of the points in $X_i$. 
This way of defining constrained $k$-means has the advantage that a number of known constrained versions of the $k$-means/median problem fit into the framework.
For instance, consider the {\em $r$-gather clustering problem} where the constraint on the cluster is that every cluster should contain at least $r$ points. See \cite{dx15} for a more elaborate list of problems that fit into the unified framework of Ding and Xu.
This kind of generalisation raises questions about the representation and conciseness of the cluster constraints. 
This is an important consideration while defining a unified framework since the number of possible clusterings of the data can be very large. 
This problem is resolved by defining a {\em partition algorithm} as an alternative to stating all valid clusterings for a particular problem. 
A partition algorithm, when given a center set $\{c_1, ..., c_k\} \subseteq \R^d$, outputs a valid clustering $X_1, ..., X_k$ such that $\sum_{i=1}^{k} \Phi(c_i, X_i)$ is minimised\footnote{Note that for the unconstrained or standard $k$-means problem, the partition algorithm is simply the {\em Voronoi partitioning algorithm}.}. 
Interestingly, for many of the constrained versions of the $k$-means problem, such as the $r$-gather/capacity problem, there is a partition algorithm available. 
One of the contributions of Ding and Xu~\cite{dx15} was to give such partition algorithms for a number of constrained clustering problems. 
The other main contribution of Ding and Xu~\cite{dx15} was to give a polynomial time approximation scheme (PTAS) for the constrained $k$-means problem. 
This means that one gets a PTAS for {\em any} constrained version of the $k$-means problem as long as there is an efficient partition algorithm for that version of the $k$-means problem. Subsequently, Bhattacharya \etal~\cite{bjk} gave a faster PTAS while Goyal \etal~\cite{gjk}  gave a streaming PTAS  for the problem, both using the $D^2$-sampling technique. 
At a high-level, the constrained binary $k$-means problem that we defined earlier, seems to be yet another constrained version of the $k$-means problem. 
So, the relevant question in the context of the current discussion is: 
\begin{quote}
{\it Does the constrained binary $k$-means problem fit into the unified framework of Ding and Xu~\cite{dx15}?}
\end{quote}
If the answer to the above question were yes, then a streaming PTAS for the constrained binary $k$-means problem would trivially follow from the recent work of Goyal \etal~\cite{gjk}. 
Unfortunately, this is not true. 
Note that the framework of Ding and Xu~\cite{dx15}
defines the constraints on the clusters while the definition of constrained binary $k$-means problem defines constraints on the centers. 
However, we note that the $D^2$-sampling based techniques of \cite{bjk,gjk} can be extended to this setting.

Goyal \etal~\cite{gjk} used a constant factor approximation algorithm for the standard $k$-means problem as a subroutine to obtain a PTAS for the constrained $k$-means problem within the framework of Ding and Xu~\cite{dx15}. 
Since their algorithm is based on a simple sampling idea, their algorithm can be converted to a constant pass streaming algorithm using the {\em reservoir sampling} technique.
In this work, we try to extend the same ideas to the constrained binary $k$-means problem.
However, to make the sampling ideas work, we will need additional results. 
First, we will need a streaming algorithm that gives a constant factor approximate solution for the binary $k$-means problem (i.e., the unconstrained problem). 
Second, we will need a result that says that it is possible to obtain good constrained centers of any target clusters if we have uniform samples from each of the clusters. 
Fortunately, both the results are already known. We discuss these next.
Let us start with the streaming constant factor approximation algorithms for the binary $k$-means problem.

\paragraph{Streaming constant approximation for binary $k$-means}
The binary $k$-means problem is basically the unconstrained version of the constrained binary $k$-means problem.
The following result follows from the work of Braverman \etal~\cite{brav11} on the standard discrete version of the $k$-means problem over arbitrary metric spaces.

\begin{theorem}\label{thm:bin-k-means}
There is a constant factor streaming algorithm for the binary $k$-means problem that runs in one pass over the data while storing $O(k \log{n})$ points in memory with overall running time $O(nk \log{n})$.
\end{theorem}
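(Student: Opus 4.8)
The plan is to reduce the binary $k$-means problem to a discrete clustering problem over a suitable metric space and then invoke the streaming algorithm of Braverman \etal~\cite{brav11} as a black box. The first step is to rewrite the objective in a form that matches the discrete metric setting. For binary vectors $x, c \in \{0,1\}^d$ we have $\norm{x - c}_2^2 = \ham(x, c)$, so the binary $k$-means cost is $\Phi(C, X) = \sum_{x \in X} \min_{c \in C} \ham(x, c)$, a sum of (unsquared) Hamming distances. I would therefore identify binary $k$-means with the discrete $k$-median problem on the metric space $(\{0,1\}^d, \ham)$; equivalently, setting $\rho(x,y) := \sqrt{\ham(x,y)}$ gives $\rho(x,y)^2 = \ham(x,y)$, so $\Phi(C,X) = \sum_{x} \min_c \rho(x,c)^2$ is exactly a discrete $k$-means objective. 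Either identification works, and the triangle inequality (\ref{eqn:triangle}) supplies the metric axioms: for $\rho$ one additionally invokes the subadditivity $\sqrt{a+b} \le \sqrt a + \sqrt b$ together with (\ref{eqn:triangle}) to conclude $\rho(x,z) \le \rho(x,y) + \rho(y,z)$.

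With the problem recast as discrete metric clustering, the second step is simply to run the one-pass streaming algorithm of Braverman \etal~\cite{brav11}, which maintains a weighted summary of $O(k \log n)$ points and returns a center set achieving a constant-factor approximation to the optimal discrete solution. Because that algorithm selects its centers from among the stream points, and every stream point lies in $\{0,1\}^d$, the returned centers are automatically feasible binary centers, so no rounding or post-processing is needed to return to the binary domain. The claimed space bound of $O(k\log n)$ points and the running time $O(nk\log n)$ are then inherited directly from the guarantees of \cite{brav11} for the discrete setting.

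The one point requiring care is that the guarantee of \cite{brav11} is stated against the best solution whose centers are restricted to stream points, whereas the binary $k$-means optimum $\OPT$ may use arbitrary centers in $\{0,1\}^d$. I would close this gap with the standard triangle-inequality argument: if $c^{\star}$ is an optimal center serving some cluster and $p$ is the stream point closest to $c^{\star}$, then for every $x$ in that cluster $\ham(x,p) \le \ham(x,c^{\star}) + \ham(c^{\star},p) \le 2\,\ham(x,c^{\star})$, since $p$ is the closest stream point to $c^{\star}$. Summing over the clusters shows the best input-point solution costs at most a constant times $\OPT$, and composing this constant with the approximation factor of \cite{brav11} yields an overall constant-factor guarantee. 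I expect this composition of two constant factors — restriction to input-point centers, followed by the streaming approximation — to be the only genuinely substantive step; the reduction itself is immediate once the objective is rewritten as a sum of Hamming distances.
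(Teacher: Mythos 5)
Your proposal takes essentially the same route as the paper: the paper's entire justification for this theorem is to observe that, since $\norm{x-c}_2^2 = \ham(x,c)$ for binary vectors, binary $k$-means is an instance of the discrete $k$-means problem over the metric $(\{0,1\}^d,\ham)$, and to invoke the one-pass, $O(k\log n)$-point streaming algorithm of Braverman \etal~\cite{brav11} as a black box, exactly as you do. Your extra step bounding the loss from restricting centers to stream points (and composing that constant with the approximation factor of \cite{brav11}) is a detail the paper leaves implicit, and it is handled correctly.
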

Let us now discuss the second issue about obtaining good centers using uniform samples from every cluster.

\paragraph{Good centers from uniform samples} We will now discuss the possibility of obtaining good centers for the constrained binary $k$-means problem using uniform samples from each of the target clusters. 
Note that for the standard $k$-means problem, this is possible since the centroid of a uniformly sampled set of $O(\frac{1}{\veps})$ points from any cluster gives a good center for that cluster. 
A good center here means that the cost with respect to this center is within $(1+\veps)$-factor of the cost with respect to the optimal center of the cluster. This result follows from a result of Inaba \etal~\cite{inaba}. 
In the context of the constrained binary $k$-means problem, just having a uniform sample from each of the clusters may not be sufficient to obtain good centers for the clusters in the $(1+\veps)$-approximation sense. 
However, Fomin \etal~\cite{fomin18} and Ban \etal~\cite{ban} showed that if one has an estimate of the size of the optimal clusters in addition to certain minimum number of uniform samples from them, then one can obtain good constrained centers for the clusters in the $(1+\veps)$-approximation sense.
The following theorem states the result of Fomin \etal~\cite{fomin18} formally. 
A similar sampling result was given and used by Ban \etal~\cite{ban}. 
We use the formulation of Fomin \etal

\begin{theorem}[Follows from Fomin \etal~\cite{fomin18}]\label{thm:fomin}
For a given instance $J = (X, k, \mathcal{R})$ of the problem, let $X_1, ..., X_k$ denote an arbitrary partition of the points in $X$, and let $\{c_1, ..., c_k\}$ be a center set satisfying $\mathcal{R}$ that minimises $\sum_{i=1}^{k} \sum_{x \in X_i} \ham(x, c_i)$.
Let $\veps > 0$ and $\tau = \Theta(\frac{k}{\veps^2} \log{\frac{1}{\veps}})$. Let $w_1, ..., w_k \in \mathbb{N}$ be such that for every $i$, $|X_i| \leq w_i \leq (1 + \frac{\veps}{2}) \cdot |X_i|$.
Let $S_1, ..., S_k$ denote a multiset of the points such that for every $i$, $S_i$ consists of $\tau$ points from $X_i$ sampled independently and uniformly with replacement. 
Then there is a simple algorithm ${\tt best}_{\mathcal{R}}(J, S_1, ..., S_k, w_1, ..., w_k)$ given below that outputs a center $C$ satisfying $\mathcal{R}$ such that $\E[\Phi(C, X)] \leq (1+\veps) \cdot \left(\sum_{i=1}^{k} \sum_{x \in X_i} \ham(x, c_i) \right)$.
\begin{framed}
${\tt best}_{\mathcal{R}}(J, S_1, ..., S_k, w_1, ..., w_k)$\\
\hspace*{0.3in} - For every $i \in \{1, ..., d\}$:\\
\hspace*{0.6in} - Let $(b_1, ..., b_k) \leftarrow \arg\min_{(b_1, ..., b_k) \in R_i} {f_i(b_1, ..., b_k)}$, where \\
\hspace*{0.8in} $f_i(b_1, ..., b_k) \equiv \sum_{j: b_j=1} w_j \cdot |S_j^{(i, 0)}| + \sum_{j: b_j = 0} w_j \cdot |S_j^{(i, 1)}|$, and \\
\hspace*{0.8in} $S_j^{(i, b)}$ denotes the set of strings in $S_j$ that has the $i^{th}$ bit as $b$.\\
\hspace*{0.6in} - $(c_1[i], c_2[i], ..., c_k[i]) \leftarrow (b_1, ..., b_k)$\\
\hspace*{0.3in} - return $(c_1, ..., c_k)$
\end{framed}
The running time of the algorithm is $O(kd \tau  + kd \cdot \sum_i |R_i|)$.
\end{theorem}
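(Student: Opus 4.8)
The plan is to exploit that both the objective and the relation constraints decompose coordinate by coordinate, reducing the analysis to a single coordinate, and then to charge the sampling error \emph{multiplicatively} against the optimal cost. For a coordinate $i\in\{1,\dots,d\}$, a bit-tuple $b=(b_1,\dots,b_k)$, and $b'\in\{0,1\}$ write $X_j^{(i,b')}=\{x\in X_j: x[i]=b'\}$ (the counterpart of the algorithm's $S_j^{(i,b')}$) and let $g_i(b)\equiv\sum_{j=1}^k|X_j^{(i,1-b_j)}|$ be the true cost of using $b_j$ as the $i$-th bit of the $j$-th center. Since $\ham$ is additive across coordinates and membership in $\mathcal{R}$ is imposed coordinate-wise, $\sum_{j}\sum_{x\in X_j}\ham(x,c_j)=\sum_{i=1}^d g_i(c^{(i)})$ with $c^{(i)}=(c_1[i],\dots,c_k[i])\in\arg\min_{b\in R_i}g_i(b)$, and the returned center $C$ has $i$-th coordinate $b^{(i)}=\arg\min_{b\in R_i}f_i(b)$. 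Because $\Phi$ assigns each point to its nearest center, $\Phi(C,X)\le\sum_j\sum_{x\in X_j}\ham(x,C_j)=\sum_i g_i(b^{(i)})$, so it suffices to prove $\sum_i\E[g_i(b^{(i)})]\le(1+\veps)\sum_i g_i(c^{(i)})$, which I would establish coordinate by coordinate.

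Fix $i$. A one-line computation shows $f_i$ is an unbiased estimator of a re-weighted cost: with $\tilde g_i(b)\equiv\sum_j\frac{w_j}{|X_j|}|X_j^{(i,1-b_j)}|$, each of the $\tau$ samples of $X_j$ carries the $i$-th bit with the correct marginal, so $\E[f_i(b)]=\tau\cdot\tilde g_i(b)$. The hypothesis $|X_j|\le w_j\le(1+\tfrac{\veps}{2})|X_j|$ gives the pointwise sandwich $g_i(b)\le\tilde g_i(b)\le(1+\tfrac{\veps}{2})g_i(b)$. Thus if $f_i$ equalled its mean the algorithm would return $\arg\min_b\tilde g_i(b)$, whose true cost is at most $(1+\tfrac{\veps}{2})g_i(c^{(i)})$; the weight slack is exactly what supplies this first factor.

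It remains to absorb the deviation of $f_i$ from $\tilde g_i$ into a second $(1+\tfrac{\veps}{2})$ factor. I would write the expected excess as $\E[g_i(b^{(i)})-g_i(c^{(i)})]=\sum_{b\in R_i}\bigl(g_i(b)-g_i(c^{(i)})\bigr)\,\Pr[b^{(i)}=b]$, bound $\Pr[b^{(i)}=b]\le\Pr[f_i(b)\le f_i(c^{(i)})]$, and note that a tuple whose true cost exceeds that of $c^{(i)}$ by a gap $\Delta$ can beat $c^{(i)}$ empirically only if $f_i$ deviates from its mean by about $\Delta$. A multiplicative Chernoff bound, union-bounded over the at most $2^k$ tuples of $R_i$, makes $\Pr[b^{(i)}=b]$ decay fast enough in $\Delta$ that $\bigl(g_i(b)-g_i(c^{(i)})\bigr)\Pr[b^{(i)}=b]$ sums to at most $\tfrac{\veps}{2}\,g_i(c^{(i)})$; here the per-cluster minority counts $\min(|X_j^{(i,0)}|,|X_j^{(i,1)}|)$, which lower-bound $g_i(c^{(i)})$, are the natural scale to charge against, and $\tau=\Theta(\frac{k}{\veps^2}\log\frac{1}{\veps})$ is the sample size that makes this hold across the whole range of minority fractions. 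Summing over $i$ and composing with the previous paragraph (absorbing the lower-order term by a constant rescaling of $\veps$) yields $\sum_i\E[g_i(b^{(i)})]\le(1+\veps)\sum_i g_i(c^{(i)})$.

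The crux is this last step: the error must be charged multiplicatively, not additively. A naive additive concentration bound per coordinate, union-bounded over all $d$ coordinates, fails twice over --- $\tau$ carries no dependence on $d$, and an additive per-coordinate error of order $1/\sqrt{\tau}$ sums to a quantity proportional to $n$, which can far exceed the optimum. The resolution is that choosing a wrong tuple is exponentially unlikely in the size of the cost gap it creates, so the expected excess at a coordinate is only a small multiple of the optimal cost at that same coordinate; this is where the $\log(1/\veps)$ and the linear dependence on $k$ in $\tau$ are consumed by the union bound over the constrained tuples.
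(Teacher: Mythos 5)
First, a point of reference: the paper does not actually prove Theorem~\ref{thm:fomin}; it imports it from Fomin \etal~\cite{fomin18}, remarking only that the cited proof never uses optimality of the partition and hence applies verbatim to an arbitrary partition $X_1,\dots,X_k$. So your sketch cannot be checked against an in-paper argument, only against the cited one --- and your skeleton (coordinate-wise decomposition of both the cost and the relations, unbiasedness $\E[f_i(b)]=\tau\tilde g_i(b)$, the sandwich $g_i(b)\le\tilde g_i(b)\le(1+\tfrac{\veps}{2})g_i(b)$ from the weight estimates, and charging the sampling error multiplicatively against the per-coordinate optimum) is exactly the right one and is the strategy of the cited proof. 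Your closing paragraph also correctly identifies why an additive per-coordinate concentration bound is doomed.

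The soft spot is the one step you leave as a plan, and as written it does not quite close. Read literally, ``union-bounded over the at most $2^k$ tuples of $R_i$'' means demanding that each tuple $b$ contribute at most $\veps 2^{-k}g_i(c^{(i)})$ to the expected excess. In the near-balanced regime (all relevant clusters with minority fraction bounded away from $0$ and $\tfrac12$), the deviation needed to misrank $b$ against $c^{(i)}$ is sub-Gaussian at scale $\sqrt{V/\tau}$ with $V=\sum_{j\in D}w_j/\max_{j\in D}w_j\ge 1$, and maximizing $\Delta e^{-2\tau\Delta^2/V}$ over the gap $\Delta$ shows that a per-tuple budget of $\veps2^{-k}g_i(c^{(i)})$ forces $\tau=\Omega(4^k/\veps^2)$, not the claimed $\Theta(\tfrac{k}{\veps^2}\log\tfrac1\veps)$. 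The fix is a bucketing step you should make explicit: tuples with excess $g_i(b)-g_i(c^{(i)})\le\tfrac{\veps}{2}g_i(c^{(i)})$ contribute at most $\tfrac{\veps}{2}g_i(c^{(i)})$ in aggregate simply because the selection probabilities sum to one, and only for the remaining tuples does one union-bound --- but for those the required deviation is already $\Omega(\veps)$ times the optimal scale, so the exponent is $\Omega(\tau\veps^2\bar p^2 V)$ and beating $2^k$ costs only a factor $k$ inside $\tau$. Two further technicalities you should not elide: the event $f_i(b)\le f_i(c^{(i)})$ concerns a sum of binomials with heterogeneous weights $w_j$, so the Chernoff/Hoeffding scale is $\max_{j\in D}w_j$ and one must verify that $g_i(c^{(i)})$ dominates $\max_{j}w_j\cdot V\bar p$ (it does, since $w_j\le(1+\tfrac\veps2)|X_j|$ and $\sum_{j\in D}|X_j^{(i,1-c_j[i])}|\le g_i(c^{(i)})$); and in the small-minority regime the sub-Gaussian bound must be replaced by the multiplicative upper-tail bound $(2e\bar p)^{\tau V/2}$, which is where the $\log\tfrac1\veps$ in $\tau$ is consumed. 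With those three ingredients spelled out, your argument goes through and recovers the stated sample complexity.
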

Note that the statement of the above lemma deviates from the statement of its {\it parent} lemma by Fomin \etal~\cite{fomin18}. 
The lemma by Fomin \etal~\cite{fomin18} is not for an arbitrary partition $X_1, ..., X_k$ of the given dataset $X$ and center set $\{c_1, ..., c_k\}$ satisfying $\mathcal{R}$ that optimises $\sum_{i=1}^{k} \sum_{x \in X_i} \ham(x, c_i)$. It is for the optimal partitioning $\Xst_1, ..., \Xst_k$. Note that for the optimal partitioning $\Xst_1, ..., \Xst_k$ and its corresponding optimal center set $\{\cst_1, ..., \cst_k\}$, we have $\sum_{i=1}^{k} \sum_{x \in \Xst_i} \ham(x, \cst_i) = OPT$. 
So, the final bound on the expectation given by Fomin \etal~\cite{fomin18} is in terms of $OPT$.
As far as the proof of the above theorem is concerned, we comment that the proof is essentially the same since the proof of Fomin \etal does not use the optimality of the partition $\Xst_1, ..., \Xst_k$ and in fact the proof holds for any partition $X_1, ..., X_k$ as stated in the theorem above.

The above theorem tells us that as long as we can guess the size of the target clusters and obtain $\tau$ uniform samples from every cluster, we should be able to obtain good centers for these clusters.
For the size of the clusters, since we only need an estimate within a multiplicative factor of $(1+\frac{\veps}{2})$, we can employ a brute-force strategy of trying out all the $\left(\log_{(1+\frac{\veps}{2})}{n}\right)^k$ possibilities. This brute-force strategy contributes a multiplicative factor of $(\log{n})^k$ in the running time of the PTAS\footnote{
Note that Fomin \etal~\cite{fomin18} have managed to remove this factor from the running time of their PTAS using a {\it peeling strategy} that handle optimal clusters in a sequence of iterations. However, using this peeling strategy to design a streaming PTAS that works in a few passes becomes difficult.
}.
As for obtaining $\tau$ uniform samples from each of the target clusters, this is a more tricky issue and it is not  immediately clear how to do this. 
If all the clusters are roughly of equal size, then one can uniformly sample $poly(\frac{k}{\veps})$ points from $X$ with replacement and then try out all possible $\tau$-sized (multi) subsets. 
If all clusters are roughly of equal size, we can argue that one of the (multi) subsets $S_1', ..., S_k'$ so attempted, will be $\tau$-sized uniform samples from $X_1, ..., X_k$. 
However, if the optimal clusters are of very different sizes, then uniform sampling from $X$ will clearly not work since the chance of sampling from a very small sized cluster may be very small. 
$D^2$-sampling has turned out to be a very useful tool in such cases where uniform sampling does not work. 
Given a set of centers $C$, the idea is to sample points with probability proportional to the squared distance of the points from the closest center in $C$. This boosts the probability of sampling from small sized clusters that does not have a representative in the center set $C$. 
This idea suggests an iterative way of picking good centers in $k$ rounds where one argues that either there is a good chance of picking good centers from uncovered clusters in subsequent rounds or that the current set of centers gives good $k$-means cost. 
This idea, however, is not likely to lead to a streaming algorithm with few passes, especially in the current context where uniform samples from all clusters are needed simultaneously to meet the constraints.
So, we use the idea of Bhattacharya \etal~\cite{bjk} that was subsequently used by Goyal \etal~\cite{gjk} to obtain streaming algorithms for various constrained versions of the $k$-means problems.
The main idea is to start with a constant factor approximate solution $B$ for the binary $k$-means problem and then consider the data points $X_1', ..., X_k'$ constructed (though not explicitly algorithmically) using points from $X_1, ..., X_k$ and $B$. 
The point sets $X_1', ...., X_k'$ has two advantages over the optimal partition $X_1, ..., X_k$ -- (i) good centers for $X_1', ..., X_k'$ are also good for $X_1, ..., X_k$, and (ii) it is possible to simultaneously obtain $\tau$ uniform samples from each of $X_1', ..., X_k'$.
Along with the result of Fomin \etal~\cite{fomin18} (Theorem~\ref{thm:fomin}), this should be sufficient to obtain good centers for the constrained binary $k$-means problem.
The PTAS based on the above ideas is extremely simple and can be stated as a short pseudocode given below.

\begin{framed}
{\tt GoodCenters}($J, B, \veps$)\\
\hspace*{0.6in} {\bf Inputs}: Input instance $J=(X, k, \mathcal{R})$, $\alpha$-approximate $B$, and accuracy $\veps$\\
\hspace*{0.6in} {\bf Output}: A center set $D$ satisfying $\mathcal{R}$\\
\hspace*{0.6in} {\bf Constants}: $\eta = \Theta(\frac{\tau \alpha k}{\veps^2}); \tau = \Theta(\frac{k}{\veps^2} \log{\frac{k}{\veps}})$; $\zeta = \Theta(\frac{1}{\veps})$\\
\hspace*{0.2in} (1) \ \ \ $cost \leftarrow nd$; $\mathcal{C} \leftarrow \emptyset$\\
\hspace*{0.2in} (2) \ \ \ Repeat $\zeta$ times:\\
\hspace*{0.2in} (3)\hspace*{0.2in}  \ \ \ Sample a multi-set $M$ of $\eta k$ points from $X$ using $D^2$-sampling w.r.t. center set $B$\\
\hspace*{0.2in} (4)\hspace*{0.2in}  \ \ \ $M \leftarrow M \cup$ \{$\tau k$ copies of each element in $B$\}\\
\hspace*{0.2in} (5)\hspace*{0.2in} \ \ \ For all disjoint subsets $S_1, ..., S_k$ of $M$ such that $\forall i, |S_i| = \tau$:\\
\hspace*{0.2in} (6)\hspace*{0.5in} \ \ \ For all $w_1, ..., w_k \in \{1, (1+\frac{\veps}{2})^1, ..., (1+\frac{\veps}{2})^{\lceil \log_{1+\frac{\veps}{2}}{n} \rceil}\}$:\\
\hspace*{0.2in} (7)\hspace*{0.9in} $\mathcal{C} \leftarrow \mathcal{C} \cup {\tt best}_{\mathcal{R}}(J, S_1, ...,S_k, w_1, ..., w_k)$\\
\hspace*{0.2in} (8) \ \ \ return the $k$-center set from $\mathcal{C}$ with least cost
\end{framed}

\noindent
We prove the following result with respect to the algorithm given above: 

\begin{theorem}\label{thm:main-theorem}
Let $0 < \veps \leq \frac{1}{2}$. For any constrained binary $k$-means instance $J = (X, k, \mathcal{R})$, let $B$ denote an $\alpha$-approximate solution for the binary $k$-means problem instance $(X, k)$. The algorithm ${\tt GoodCenters}(J, B, \veps)$, returns a center set $D$ satisfying $\mathcal{R}$ such that: 
\[
\pr[\Phi(D, X) \leq (1 + \veps) \cdot OPT] \geq \frac{3}{4}.
\]
The running time of the algorithm is $O\left( nd \cdot (\log{n})^k \cdot 2^{\tilde{O}(\frac{k^2}{\veps^2})}\right)$.
\end{theorem}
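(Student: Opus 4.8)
The plan is to show that one execution of the loop body (lines~3--7) outputs, with probability $\Omega(\veps)$, a relation-satisfying center set of cost at most $(1+O(\veps))\cdot OPT$; the $\zeta=\Theta(1/\veps)$ independent repetitions, together with the ``return the least cost'' step of line~8, then amplify this to probability at least $3/4$, after which rescaling $\veps$ by a constant yields the stated bound. Throughout I fix an optimal constrained clustering $\Xst_1,\ldots,\Xst_k$ with relation-satisfying centers $\cst_1,\ldots,\cst_k$, so that $\sum_i\sum_{x\in\Xst_i}\ham(x,\cst_i)=OPT$, and I record that $B$ being $\alpha$-approximate for the unconstrained binary $k$-means problem gives $\Phi(B,X)\le\alpha\cdot OPT$, since the unconstrained optimum lower-bounds $OPT$.

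The heart of the argument is to interpose a family of augmented multisets $X_1',\ldots,X_k'$ between the optimal clusters and the samples, where $X_i'$ consists of all the points of $\Xst_i$ together with a calibrated number of copies of the centers of $B$ (the copies serving as uniform proxies for the points of $\Xst_i$ lying very close to $B$). Two properties are needed. The \emph{transfer property} is that $X\subseteq X' \mathrel{:=}\biguplus_i X_i'$, so $\Phi(C,X)\le\Phi(C,X')$ for every center set $C$, while the total contribution of the added copies measured at the optimal centers is small: by the triangle inequality~\eqref{eqn:triangle} and $\Phi(B,X)\le\alpha\cdot OPT$, the number of copies can be chosen so that $\sum_i\sum_{x\in X_i'}\ham(x,\cst_i)\le OPT + O(\veps)\cdot OPT$. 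The \emph{sampling property} is that the multiset $M$ assembled in lines~3--4 contains, with probability $\Omega(\veps)$, disjoint blocks $S_1,\ldots,S_k$ of size $\tau$ with $S_i$ distributed as $\tau$ independent uniform draws from $X_i'$; here the $\eta k$ points of the $D^2$-sample supply uniform representatives of the high-$D^2$-weight (``far'') part of each cluster, with $\eta=\Theta(\tau\alpha k/\veps^2)$ chosen so that every cluster receives at least $\tau$ of them, while the $\tau k$ copies of each element of $B$ added in line~4 represent the ``near'' part.

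Granting these, I condition on the sampling event and apply Theorem~\ref{thm:fomin} to the instance with point set $X'$ and partition $X_1',\ldots,X_k'$, feeding the blocks $S_1,\ldots,S_k$ and weights $w_i$ equal to $|X_i'|$ rounded up to the geometric grid of line~6 (which contains a value in $[|X_i'|,(1+\tfrac{\veps}{2})|X_i'|]$). Writing $c_1',\ldots,c_k'$ for the optimal relation-satisfying centers of this partition, the theorem gives a relation-satisfying $C={\tt best}_{\mathcal{R}}(J,S_1,\ldots,S_k,w_1,\ldots,w_k)$ with $\E[\Phi(C,X')]\le(1+\veps)\sum_i\sum_{x\in X_i'}\ham(x,c_i')$. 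Chaining the transfer property with the feasibility of $\{\cst_i\}$ for this partition gives $\E[\Phi(C,X)]\le\E[\Phi(C,X')]\le(1+\veps)\sum_i\sum_{x\in X_i'}\ham(x,\cst_i)\le(1+O(\veps))\cdot OPT$, the expectation being over the uniform draws. Markov's inequality applied to the nonnegative variable $\Phi(C,X)-OPT$ then makes this single candidate have cost at most $(1+O(\veps))\cdot OPT$ with probability at least $3/4$; since line~5 enumerates all block partitions of $M$ and line~6 all grid weights, the loop body does at least this well. Combining the conditional success with the $\Omega(\veps)$ probability of the sampling event and amplifying over the $\zeta=\Theta(1/\veps)$ rounds yields the claimed $\ge 3/4$. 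For the running time, the $D^2$-sweep and the final cost evaluation are $O(nd)$, the set $M$ has size $\tilde{O}(k^3/\veps^4)$ so enumerating $k$ disjoint $\tau$-blocks costs $|M|^{O(k\tau)}=2^{\tilde{O}(k^2/\veps^2)}$, line~6 contributes the $(\log n)^k$ factor, and ${\tt best}_{\mathcal{R}}$ is polynomial, giving $O\!\left(nd\cdot(\log n)^k\cdot 2^{\tilde{O}(k^2/\veps^2)}\right)$ after folding in $\zeta$.

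I expect the sampling property to be the main obstacle: one must show that $D^2$-sampling with respect to $B$, after augmentation by the center copies, simultaneously simulates uniform sampling from all $k$ augmented clusters with probability $\Omega(\veps)$, and this must survive the regime in which the optimal clusters have wildly different sizes and $D^2$-weights --- exactly where uniform sampling from $X$ breaks down. Calibrating the augmentation so that the far points are hit uniformly and in sufficient number by the $D^2$ draws while the near points are faithfully represented by the added copies, and so that the transfer and sampling properties hold at once, is the delicate technical core.
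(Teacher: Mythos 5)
There is a genuine gap, and it sits exactly where you flag it as ``the delicate technical core'': your two required properties of the augmented multisets $X_1',\ldots,X_k'$ are mutually incompatible for the construction you describe. You define $X_i'$ as \emph{all} of $\Xst_i$ plus added copies of centers of $B$, so that $X\subseteq X'$ and the transfer property $\Phi(C,X)\le\Phi(C,X')$ is trivial. But then the sampling property fails: a $\tau$-wise uniform sample from $X_i'$ must hit the points of $\Xst_i$ that lie very close to (or coincide with) centers of $B$, and those points have vanishing $D^2$-probability and are not elements of $B$, so they simply are not in $M$ with any useful probability. The paper resolves this by \emph{replacing} the near points: for $j$ with small $\Phi(B,X_j)$ it sets $X_j'=\{n(x):x\in X_j\}$, and otherwise $X_j'=X_j^{far}\cup\{n(x):x\in X_j^{near}\}$ with a threshold radius $R_j=\frac{\veps}{9}\Phi(B,X_j)/|X_j|$. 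Every element of such an $X_j'$ is either a center of $B$ (available in $\tau$ copies from line~4) or a far point (each with $D^2$-probability at least $\frac{\veps^2}{54\alpha k}\cdot\frac{1}{|X_j|}$, Lemma~\ref{lem:osample}), which after a rejection-sampling step yields genuinely uniform samples. The price is that $X\not\subseteq X'$, so the transfer back to $X$ is no longer free: it is Lemma~\ref{lemma:main-lem}, a triangle-inequality argument whose error terms $\sum_{j\in I_s}\Phi(B,X_j)$ and $\sum_{j\in I_l}\bar n_jR_j$ are only $O(\veps)\cdot OPT$ because of the $I_s$/$I_l$ split and Lemma~\ref{lemma:inter2}. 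None of this machinery appears in your sketch, and it cannot be bypassed.

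A second, related quantitative problem: you claim the added copies can be calibrated so that $\sum_i\sum_{x\in X_i'}\ham(x,\cst_i)\le(1+O(\veps))\cdot OPT$ ``by the triangle inequality and $\Phi(B,X)\le\alpha\cdot OPT$.'' If you add one copy of $n(x)$ per $x\in\Xst_i$, the triangle inequality gives an added cost of up to $\Phi(B,X)+OPT\le(1+\alpha)\cdot OPT$, which is $O(\alpha)\cdot OPT$, not $O(\veps)\cdot OPT$. Getting an $O(\veps)$ overhead requires restricting the proxying to near points (where $\ham(n(x),x)\le R_j$, so the total is $\sum_j\bar n_jR_j\le\frac{\veps}{4}\sum_j\Delta_j$ by Lemma~\ref{lemma:inter2}) and to the small-cost clusters $I_s$ (whose total $\Phi(B,\cdot)$ is at most $\frac{\veps}{6}\cdot OPT$ by definition of $I_s$). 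So the ``calibration'' is not a tuning of multiplicities but the specific near/far decomposition and case analysis that constitute the bulk of Section~\ref{sec:ptas-analysis}. Your outer shell (enumeration over blocks and grid weights, invocation of Theorem~\ref{thm:fomin} for an arbitrary partition, Markov plus $\Theta(1/\veps)$-fold amplification, and the running-time accounting) matches the paper, but the core construction as stated would not go through.
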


We prove the above theorem in Section~\ref{sec:ptas-analysis}. 
Let us now see how this algorithm gives a 3-pass streaming PTAS for the problem.
The first pass of the algorithm is used to find a constant factor approximate solution $B$ to the binary $k$-means problem $(X, k)$ corresponding to the given instance $J = (X, k, \mathcal{R})$. 
From the discussion earlier, we know that there is a one pass algorithm that returns a constant factor approximate solution to the binary $k$-means problem and that uses $O(k \log{n})$ space. 
In the second pass, we execute lines (1-7) of the algorithm ${\tt GoodCenters}(J, B, \veps)$. 
This can indeed be executed in a single pass. 
Note that line (2) is for probability amplification and all $\zeta$ iterations can be executed independently. 
The main step is line (3) where we need to $D^2$-sample $\eta k$ points w.r.t. center-set $B$ independently from $X$.
This can be done using {\em reservoir sampling} in a single pass\footnote{{\it Reservoir sampling}: Let $y_i$ denote the squared distance of points $x_i$ to the nearest center in the center set $B$. Then one $D^2$-sample can be obtained while making a pass over the data in the following manner: Store the first point and on seeing the $i^{th}$ point for $i > 1$, replace the stored element with probability $\frac{y_i}{\sum_{j=1}^{i} y_j}$ and continue with the remaining probability.}.
Lines (5-7) accumulates $k$-center sets in $\mathcal{C}$ corresponding to all possible subsets and all possible choices for $w_1, ..., w_k$. 
The final step of line (8) involves picking the $k$-center set from $\mathcal{C}$ with the least cost. 
We need one more pass over the data to perform line (8). 
This makes a total of 3 passes.
The space requirement for the $3^{rd}$ pass is $O \left(d \cdot (\log{n})^k \cdot 2^{\tilde{O}(\frac{k^2}{\veps^2})} \right)$. We summarise our result formally as the theorem below the proof of which follows trivially from the discussion above.

\begin{theorem}[Main result for constrained binary $k$-means]\label{thm:bin-main}
Let $0 < \veps \leq 1/2$. There is a 3-pass streaming algorithm that outputs a $(1+\veps)$-approximate solution for any instance of the constrained binary $k$-means problem. The space and per-item processing time of our algorithm is $O \left(d \cdot (\log{n})^k \cdot 2^{\tilde{O}(\frac{k^2}{\veps^2})} \right)$.
\end{theorem}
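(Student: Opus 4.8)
The plan is to assemble the three ingredients already in place---the one-pass constant-factor approximation of Theorem~\ref{thm:bin-k-means}, the correctness guarantee of the ${\tt GoodCenters}$ procedure from Theorem~\ref{thm:main-theorem}, and the observation that each of its steps is streamable---into a single three-pass algorithm, and then verify the claimed space and per-item time bounds. I would first run the streaming constant-factor algorithm of Theorem~\ref{thm:bin-k-means} on $X$ to obtain an $\alpha$-approximate center set $B$ for the binary $k$-means instance $(X,k)$; this consumes the first pass in $O(k\log n)$ space, which is dominated by the bounds claimed for the later passes. With $B$ in hand, the remaining task is to execute ${\tt GoodCenters}(J,B,\veps)$ in a streaming fashion and invoke Theorem~\ref{thm:main-theorem} for its output guarantee.

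For the second pass I would simulate lines (1)--(7) of ${\tt GoodCenters}$. The only step that reads $X$ is line (3), the $D^2$-sampling of $\eta k$ points with respect to $B$; I would implement this by maintaining, for each of the $\zeta$ independent repetitions, a bank of reservoir samplers, where a single $D^2$-sample is produced by the rule that replaces the currently stored point on seeing $x_i$ with probability $y_i/\sum_{j\le i} y_j$, with $y_j$ the squared Hamming distance of $x_j$ to its nearest center in $B$. A short calculation shows this reproduces exactly the $D^2$-distribution, so all $\zeta$ repetitions and all $\eta k$ samples per repetition are drawn simultaneously in one pass. Because lines (4)--(7) are purely local post-processing of the sampled multiset $M$, the copies of $B$, the relations $\mathcal{R}$, and the weight guesses $w_1,\dots,w_k$ (with each call to ${\tt best}_{\mathcal{R}}$ being the algorithm of Theorem~\ref{thm:fomin}), they require no further access to $X$. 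The output is the candidate collection $\mathcal{C}$, whose cardinality is the product of the $\zeta$ repetitions, the number of disjoint $\tau$-sized subsets of $M$, and the $\bigl(\lceil\log_{1+\veps/2} n\rceil\bigr)^k$ weight tuples, giving $|\mathcal{C}| = (\log n)^k \cdot 2^{\tilde O(k^2/\veps^2)}$.

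The third and final pass performs line (8): for every candidate $C \in \mathcal{C}$ I would keep a running accumulator of $\sum_{x} \min_{c\in C} \ham(x,c)$, update all accumulators as each point streams by, and at the end return the candidate of least cost. Correctness is then immediate from Theorem~\ref{thm:main-theorem}, which guarantees that $\mathcal{C}$ contains, with probability at least $3/4$, a center set $D$ satisfying $\mathcal{R}$ with $\Phi(D,X)\le(1+\veps)\cdot OPT$; since line (8) selects the minimum-cost member of $\mathcal{C}$, the returned set is no worse, so $\pr[\Phi(D,X)\le (1+\veps)\cdot OPT]\ge \tfrac34$ as required (no further amplification is needed, as the $\zeta$ repetitions inside ${\tt GoodCenters}$ already provide it).

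Finally, the resource bounds: the space is dominated by storing $\mathcal{C}$ as $|\mathcal{C}|$ center sets of $k$ points each in $\{0,1\}^d$, i.e.\ $O\bigl(d\cdot(\log n)^k\cdot 2^{\tilde O(k^2/\veps^2)}\bigr)$, and the per-item work in the third pass---evaluating $\min_{c\in C}\ham(x,c)$ for every $C\in\mathcal{C}$---matches the same bound, the extra $k$ factor being absorbed into $\tilde O$. The only genuine (and mild) points to check are that the reservoir rule reproduces the $D^2$-distribution exactly and that the cost evaluation of line (8) is expressible as an additive accumulator so that it fits in a single pass; everything substantive has already been proved in Theorems~\ref{thm:bin-k-means} and~\ref{thm:main-theorem}, so the remainder is bookkeeping.
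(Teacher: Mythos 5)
Your proposal is correct and follows essentially the same route as the paper: first pass to obtain the constant-factor solution $B$ via Theorem~\ref{thm:bin-k-means}, second pass to execute lines (1)--(7) of {\tt GoodCenters} with the $D^2$-samples drawn by reservoir sampling, third pass to evaluate the costs of all candidates in $\mathcal{C}$ and return the minimum, with correctness delegated to Theorem~\ref{thm:main-theorem}. The space and per-item time accounting also matches the paper's.
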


Note that as per the formulation of the constrained binary $k$-means problem, the output is supposed to be a set of $k$ centers. 
The above 3-pass algorithm outputs such a $k$ center set $D$. 
However, if the objective is to output the clustering of the data points $X$ with respect to $D$, then one more pass over the data will be required and the resulting algorithm will be a 4-pass algorithm. 
This is relevant for the $\ell_0$-rank-$r$ approximation problem that we discuss next.
We obtain a result for the generalised binary $\ell_0$-rank-$r$ problem that is similar to the above result, using a simple reduction to the constrained binary $k$-means problem. 
This reduction is used by both Fomin \etal~\cite{fomin18} and Ban \etal~\cite{ban}. We restate the result of Fomin \etal~\cite{fomin18} for clarity.

\begin{lemma}[Lemma 1 and 2 of \cite{fomin18}]
For any instance $(\A, r)$ of the generalised binary $\ell_0$-rank-$r$ approximation problem, one can construct in time $O(n + d + 2^{2r})$ an instance $(X, k = 2^r, \mathcal{R})$ of constrained binary $k$-means problem with the following property: Given any $\alpha$-approximate solution $C$ of $(X, k, \mathcal{R})$, an $\alpha$-approximate solution $\B$ of $(\A, r)$ can be constructed in time $O(rnd)$.
\end{lemma}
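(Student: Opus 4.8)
The plan is to realise the rows of $\A$ as the point set of a constrained binary $k$-means instance and to encode the admissible centres, i.e. those produced by the inner-product map, as a single $k$-ary relation reused on every coordinate. Concretely, I would first fix an enumeration $\rho_1, \ldots, \rho_k$ of all of $\{0,1\}^r$ (so $k = 2^r$), let $X = \{\A_1, \ldots, \A_n\} \subseteq \{0,1\}^d$ be the multiset of rows of $\A$, and for each column vector $v \in \{0,1\}^r$ record the tuple $t(v) = (\dotp{\rho_1}{v}, \ldots, \dotp{\rho_k}{v}) \in \{0,1\}^k$. I would then set $R = \{t(v) : v \in \{0,1\}^r\}$ and take $\mathcal{R} = \{R_1, \ldots, R_d\}$ with $R_j = R$ for every $j$; crucially the relation is coordinate-independent because the inner product and the enumeration $\rho_1,\ldots,\rho_k$ do not depend on $j$. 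While building $R$ I would also store, for each tuple, one witness $v$ realising it.

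The heart of the argument is a two-way correspondence between factorisations $\mathbf{U} \cdot \mathbf{V}$ and $\mathcal{R}$-feasible centre sets. Given $\mathbf{V}$ with columns $v_1, \ldots, v_d$, define centres by $c_\ell[j] = \dotp{\rho_\ell}{v_j}$; then the coordinate-$j$ tuple $(c_1[j], \ldots, c_k[j]) = t(v_j) \in R$, so $\{c_1, \ldots, c_k\}$ satisfies $\mathcal{R}$, and the row of $\mathbf{U}\cdot\mathbf{V}$ with $u_i = \rho_\ell$ is exactly $c_\ell$. Conversely, a feasible $C = \{c_1, \ldots, c_k\}$ (indexed so that $c_\ell$ plays the role of $\rho_\ell$) yields, via the stored witnesses, a column $v_j$ with $t(v_j) = (c_1[j], \ldots, c_k[j])$ for each $j$, hence a matrix $\mathbf{V}$ with $c_\ell[j] = \dotp{\rho_\ell}{v_j}$. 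I would pair this with the assignment step for $\mathbf{U}$: set $u_i = \rho_{\ell(i)}$ where $\ell(i) \in \arg\min_\ell \ham(\A_i, c_\ell)$, so that row $i$ of $\B = \mathbf{U}\cdot\mathbf{V}$ is the centre of $C$ nearest to $\A_i$.

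These two facts give the exact cost identities I would use. For the recovered $\B$, $\norm{\A - \B}_0 = \sum_{i} \ham(\A_i, c_{\ell(i)}) = \sum_i \min_\ell \ham(\A_i, c_\ell) = \Phi(C, X)$, while for any factorisation the induced centres satisfy $\Phi(C(\mathbf{V}), X) \leq \norm{\A - \mathbf{U}\cdot\mathbf{V}}_0$ because the $k$-means cost uses the best assignment. Taking optima on each side yields $OPT_{\ell_0}(\A, r) = OPT(X, k, \mathcal{R})$, and therefore any $\alpha$-approximate $C$ produces $\B$ with $\norm{\A - \B}_0 = \Phi(C, X) \leq \alpha \cdot OPT(X,k,\mathcal{R}) = \alpha \cdot OPT_{\ell_0}(\A,r)$, i.e. an $\alpha$-approximate solution for $(\A, r)$. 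The running-time claims then follow by direct accounting: building the single relation $R$ together with its witness table touches each $v \in \{0,1\}^r$ and each $\rho_\ell$ once (the $2^{2r}$ term), the point set and the $d$ identical relation pointers cost $O(n+d)$, and the recovery reads off the $d$ witness columns of $\mathbf{V}$ and performs the nearest-centre assignment that fills in $\mathbf{U}$ and $\B$.

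The step I expect to be the main obstacle — really the only place needing care — is pinning down that $R$ is defined as \emph{exactly} the image of $v \mapsto t(v)$, so that $\mathcal{R}$-feasibility of a centre set is equivalent to its realisability by some $\mathbf{V}$; this equivalence, reused identically on all $d$ coordinates, is what both makes the reverse direction of the correspondence go through and keeps the construction time free of any per-coordinate blow-up. A secondary point to check carefully is the bookkeeping around indices: a single $\mathbf{V}$ simultaneously determines all $k$ centres (one per $\rho_\ell$), so one must verify that the $\arg\min$ assignment for $\mathbf{U}$ makes the first cost identity an equality rather than merely an inequality, which is precisely what transfers the approximation ratio without loss.
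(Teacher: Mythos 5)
Your reduction is exactly the one the paper imports from Fomin et al.\ and sketches in the sentence following the lemma: $X$ is the set of rows of $\A$, and every coordinate carries the same relation $R_i=\{(\langle x,\lambda_1\rangle,\dots,\langle x,\lambda_k\rangle): x\in\{0,1\}^r\}$ with $\lambda_1,\dots,\lambda_k$ enumerating $\{0,1\}^r$, which is precisely your image-of-$t$ relation with stored witnesses. The correspondence between feasible centre sets and factorisations, the exact cost identity via the nearest-centre choice of $\mathbf{U}$, and the resulting equality of optima are all as intended, so the proposal is correct and takes essentially the same approach.
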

The dataset $X$ corresponding to matrix $\A$, in the above reduction, is essentially the rows of the matrix $\A$ and $\forall i, R_i = \{(\langle x, \lambda_1 \rangle, ..., \langle x, \lambda_k \rangle) : x \in \{0, 1\}^r\}$ and $\lambda_i$'s are pairwise distinct vectors in $\{0, 1\}^r$. 
The above reduction and Theorem~\ref{thm:bin-main} gives the following main result for the generalised binary $\ell_0$-rank-$r$ approximation problem. 
Note that since we need to output a matrix $\B$, we will need the clustering of the rows of $\A$ and as per previous discussion this will require one more pass than that in Theorem~\ref{thm:bin-main}.

\begin{theorem}[Main result for generalised binary $\ell_0$-rank-$r$ approximation]
Let $0 < \veps \leq 1/2$. There is a 4-pass streaming algorithm that makes row-wise passes over the input matrix and outputs a $(1+\veps)$-approximate solution for any instance of the generalised binary $\ell_0$-rank-$r$ problem. 
The space and per-item processing time of our algorithm is $O \left(d \cdot (\log{n})^{2^r} \cdot 2^{\tilde{O}(\frac{2^{2r}}{\veps^2})} \right)$.
\end{theorem}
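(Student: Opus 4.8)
The plan is to combine the reduction in the preceding lemma with the 3-pass algorithm of Theorem~\ref{thm:bin-main}, and to pay for the reconstruction of the output matrix with one additional pass. First I would apply the reduction to the input instance $(\A, r)$ to obtain the constrained binary $k$-means instance $(X, k = 2^r, \mathcal{R})$. The key structural observation that makes this compatible with streaming is that the point set $X$ is exactly the set of rows of $\A$, so a row-wise pass over $\A$ is literally a pass over the points of $X$. Moreover the relations $R_1, \ldots, R_d$ depend only on $r$, the inner-product function, and the fixed vectors $\lambda_1, \ldots, \lambda_k$, and are in fact all identical; hence $\mathcal{R}$ together with the $\lambda_j$'s can be constructed offline in time $O(2^{2r})$ and stored in space $O(2^{2r})$ without touching the stream. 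Consequently the reduction does not disturb the streaming model.

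Next I would run the 3-pass streaming algorithm of Theorem~\ref{thm:bin-main} on $(X, k, \mathcal{R})$ with accuracy parameter $\veps$. This produces a center set $D = \{c_1, \ldots, c_k\}$ satisfying $\mathcal{R}$ with $\Phi(D, X) \leq (1+\veps)\cdot OPT$; that is, $D$ is a $(1+\veps)$-approximate solution of the constrained binary $k$-means instance, with the same correctness guarantee as Theorem~\ref{thm:bin-main}. By the reduction lemma, such a center set yields a $(1+\veps)$-approximate rank-$r$ solution $\B = \mathbf{U}\cdot\mathbf{V}$ of $(\A, r)$, constructible in time $O(rnd)$.

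The one genuine extra ingredient -- and the step I expect to be the crux -- is that the theorem asks us to \emph{output} the matrix $\B = \mathbf{U}\cdot\mathbf{V}$, not merely the center set $D$. From $D$ and the (data-independent) relations $\mathcal{R}$ one can read off the factor $\mathbf{V}\in\{0,1\}^{r\times d}$ column by column without a data pass, since the constraint $R_i$ forces the column $(c_1[i],\ldots,c_k[i])$ to equal $(\langle\lambda_1, v\rangle,\ldots,\langle\lambda_k, v\rangle)$ for a unique $v\in\{0,1\}^r$, which becomes column $i$ of $\mathbf{V}$. The factor $\mathbf{U}\in\{0,1\}^{n\times r}$, however, cannot be recovered from $D$ alone: its $i$-th row equals $\lambda_{j(i)}$, where $j(i)=\arg\min_j \ham(\A_{i,\cdot}, c_j)$ is the nearest center to row $i$ of $\A$, and this assignment needs the data. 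I would therefore add a fourth row-wise pass that, for each streamed row $\A_{i,\cdot}$, computes its nearest center in $D$, fixes the corresponding row of $\mathbf{U}$, and emits the row $\lambda_{j(i)}\cdot\mathbf{V}$ of $\B$. Since $\mathbf{V}$ is of size $r\times d$ it fits in memory, and each output row is produced in $O(kd)$ time, consistent with the claimed per-item bound.

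Finally I would tally the resources. The first three passes are inherited verbatim from Theorem~\ref{thm:bin-main} and the fourth is the reconstruction pass just described, giving four row-wise passes in total. Substituting $k = 2^r$ into the space and per-item bound $O(d\cdot(\log n)^{k}\cdot 2^{\tilde{O}(k^2/\veps^2)})$ of Theorem~\ref{thm:bin-main} gives $(\log n)^{2^r}$ and $2^{\tilde{O}(2^{2r}/\veps^2)}$ respectively; the offline storage of $\mathcal{R}$ and of $\mathbf{V}$ is dominated by these terms. This yields the claimed bound $O(d\cdot(\log n)^{2^r}\cdot 2^{\tilde{O}(\frac{2^{2r}}{\veps^2})})$ and completes the argument.
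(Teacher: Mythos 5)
Your proposal is correct and follows essentially the same route as the paper: apply the Fomin et al.\ reduction with $k = 2^r$, run the 3-pass algorithm of Theorem~\ref{thm:bin-main} on the rows of $\A$, and spend one extra pass to compute the assignment of rows to centers needed to output the matrix $\B$, with the stated bounds obtained by substituting $k=2^r$. You actually supply more detail than the paper (e.g., that $\mathcal{R}$ is data-independent and that $\mathbf{V}$ is recoverable from $D$ alone), which is consistent with and refines the paper's brief argument.
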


A lot of work has been done for the $k$-means problem and the $\ell_p$ low rank approximation problems. 
The following related work subsection will help see our work in the right perspective.

\subsection{Related work}
The binary $k$-means problem is a special case of the discrete variant of the classical $k$-means problem where the clustering problem is defined over the metric $(\{0, 1\}^n, \mathcal{H})$. 
The problem was introduced and studied by Kleinberg, Papadimitriou, and Raghavan~\cite{kpr98,kpr04} by the name of {\em segmentation problems}. 
They showed that the problem is $\mathsf{NP}$-hard for $k > 1$ and gave approximation algorithms for the dual maximisation problem where the goal is to maximise $nd - \Phi(C, X)$.
Ostrovsky and Rabani~\cite{or02} gave a randomised PTAS with running time $n^{f(k, \veps)}$ for some function $f$. 
More recently, Fomin \etal~\cite{fomin18} gave an efficient PTAS with running time $g(k, \veps) \cdot n^{O(1)}$ for some function $g$. 
In fact, Fomin~\etal gave such an efficient PTAS for a much more generalised version called the generalised constrained binary $k$-means problem which we discussed earlier. 
Ban~\etal~\cite{ban} independently obtained similar results\footnote{The result of Ban~\etal~\cite{ban} does not explicitly discuss the binary $k$-means problem.}.

The binary versions of the $\ell_0$-low rank approximation problem is relevant in a number of contexts (e.g., \cite{gutch10,painsky,dan,bv10,sbm03,singliar}).
Even though the generalised $\ell_0$-rank-$r$ approximation problem was named more recently by Ban~\etal~\cite{ban}, the special cases of the problem have been studied in the past.
For instance, it known that for the special case where the field for the inner product is $GF(2)$, the problem is $\mathsf{NP}$-hard for every $r \geq 1$~\cite{gv15,dan}.
Various constant factor approximation algorithms have been given for cases where $r$ is a fixed constant~\cite{sjy09,jphy14,bkw17}.
There also exist $O(r)$-approximation algorithm in time $n^{O(r)}$~\cite{dan}.
Ban~\etal~\cite{ban} showed a hardness-of-approximation result conditioned on the Exponential Time Hypothesis (ETH) showing that there is no approximation algorithm (beyond a fixed constant) for the generalised binary $\ell_0$-rank-$r$ approximation problem running in time $2^{2^{\delta r}}$ for a constant $\delta >0$.
They support their lower bound with a PTAS that runs in time $(\frac{2}{\veps})^{\frac{2^{O(r)}}{\veps^2}} nd^{1 + o(1)}$. A similar PTAS (using similar ideas) was given independently by Fomin~\etal~\cite{fomin18}.


\section{Analysis of PTAS (Proof of Theorem~\ref{thm:main-theorem})}\label{sec:ptas-analysis}
In this section, we prove our main result related to the algorithm {\tt GoodCenters}. 
We state the pseudocode for {\tt GoodCenters} and the statement of the theorem for ease of reading.

\begin{framed}
{\tt GoodCenters}($J, B, \veps$)\\
\hspace*{0.6in} {\bf Inputs}: Input instance $J=(X, k, \mathcal{R})$, $\alpha$-approximate $B$, and accuracy $\veps$\\
\hspace*{0.6in} {\bf Output}: A center set $D$ satisfying $\mathcal{R}$\\
\hspace*{0.6in} {\bf Constants}: $\eta = \Theta(\frac{\tau \alpha k}{\veps^2}); \tau = \Theta(\frac{k}{\veps^2} \log{\frac{k}{\veps}})$; $\zeta = \Theta(\frac{1}{\veps})$\\
\hspace*{0.2in} (1) \ \ \ $cost \leftarrow nd$; $\mathcal{C} \leftarrow \emptyset$\\
\hspace*{0.2in} (2) \ \ \ Repeat $\zeta$ times:\\
\hspace*{0.2in} (3)\hspace*{0.2in}  \ \ \ Sample a multi-set $M$ of $\eta k$ points from $X$ using $D^2$-sampling w.r.t. center set $B$\\
\hspace*{0.2in} (4)\hspace*{0.2in}  \ \ \ $M \leftarrow M \cup$ \{$\tau k$ copies of each element in $B$\}\\
\hspace*{0.2in} (5)\hspace*{0.2in} \ \ \ For all disjoint subsets $S_1, ..., S_k$ of $M$ such that $\forall i, |S_i| = \tau$:\\
\hspace*{0.2in} (6)\hspace*{0.5in} \ \ \ For all $w_1, ..., w_k \in \{1, (1+\frac{\veps}{2})^1, ..., (1+\frac{\veps}{2})^{\lceil \log_{1+\frac{\veps}{2}}{n} \rceil}\}$:\\
\hspace*{0.2in} (7)\hspace*{0.9in} $\mathcal{C} \leftarrow \mathcal{C} \cup {\tt best}_{\mathcal{R}}(J, S_1, ...,S_k, w_1, ..., w_k)$\\
\hspace*{0.2in} (8) \ \ \ return the $k$-center set from $\mathcal{C}$ with least cost
\end{framed}

\noindent
Following is the restatement of the main result with respect to the algorithm above.

\begin{theorem}
Let $0 < \veps \leq \frac{1}{2}$. For any constrained binary $k$-means instance $J = (X, k, \mathcal{R})$, let $B$ denote an $\alpha$-approximate solution for the binary $k$-means problem $(X, k)$. The algorithm ${\tt GoodCenters}(J, B, \veps)$, returns a center set $D$ satisfying $\mathcal{R}$ such that: 
\[
\pr[\Phi(D, X) \leq (1 + \veps) \cdot OPT] \geq \frac{3}{4}.
\]
The running time of the algorithm is $O\left( nd \cdot (\log{n})^k \cdot 2^{\tilde{O}(\frac{k^2}{\veps^2})}\right)$.
\end{theorem}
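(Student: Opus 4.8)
The plan is to reduce everything to Theorem~\ref{thm:fomin}: that result guarantees that once we hold, for each optimal cluster, a size estimate $w_i$ accurate to within a factor $(1+\tfrac{\veps}{2})$ together with a multiset $S_i$ of $\tau$ independent uniform samples from the cluster, the subroutine ${\tt best}_{\mathcal{R}}$ returns constrained centers whose expected cost is within $(1+\veps)$ of the optimal cost of that partition. The size estimates come for free from the exhaustive enumeration over $w_1,\dots,w_k$ in line~(6) — one of the $O\big((\log n)^k\big)$ guesses is correct for every cluster — so the entire difficulty is to argue that the $D^2$-sample $M$ of line~(3), once augmented by the $B$-copies of line~(4), contains the required uniform samples. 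Following Bhattacharya \etal~\cite{bjk} and Goyal \etal~\cite{gjk}, I would not try to sample the optimal clusters $X_1^*,\dots,X_k^*$ directly — hopeless when their sizes are wildly unbalanced — but instead introduce auxiliary point sets $X_1',\dots,X_k'$ that \emph{are} sampleable and whose good centers remain good for the originals.

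First I would fix an optimal constrained solution $\{c_1^*,\dots,c_k^*\}$ with induced clusters $X_1^*,\dots,X_k^*$, so that $OPT=\sum_i\sum_{x\in X_i^*}\ham(x,c_i^*)$, and record that $\Phi(B,X)\le\alpha\cdot OPT$ because $B$ is $\alpha$-approximate. Using $B$ I would split each $X_i^*$ into a \emph{far} part — points whose squared distance to the nearest center of $B$ exceeds a threshold calibrated to $\tfrac{\veps}{k}$ times the cluster's average cost — and a \emph{near} part, and define $X_i'$ by keeping the far points of $X_i^*$ while replacing each near point by its nearest center of $B$. Two facts must then be proved: (i) near points move only a little, so by the triangle inequality~\eqref{eqn:triangle} any center set that is $(1+\veps)$-good for the partition $X_1',\dots,X_k'$ is $(1+O(\veps))$-good for $X_1^*,\dots,X_k^*$; and (ii) the optimal constrained cost of the primed partition is itself within $(1+O(\veps))$ of $OPT$. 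Both follow from routine charging once the threshold is set so that the total displacement of near points is $O(\veps)\cdot OPT$.

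The heart of the argument — and the step I expect to be the main obstacle — is the simultaneous sampling claim: with constant probability $M$ admits disjoint subsets $S_1,\dots,S_k$ distributed as $\tau$ uniform samples from $X_1',\dots,X_k'$ at once. The near portion of each $X_i'$ is made of $B$-centers, and the $\tau k$ copies of every element of $B$ inserted in line~(4) furnish these in abundance; the delicate part is the far portion, which must be produced by $D^2$-sampling. Since a far point of $X_i^*$ has large squared distance to $B$, its $D^2$-weight relative to the total mass $\Phi(B,X)\le\alpha\cdot OPT$ is bounded below, and the choice $\eta=\Theta\big(\tfrac{\tau\alpha k}{\veps^2}\big)$ forces the expected number of far samples from each cluster to be $\Omega(\tau)$; a Chernoff bound with a union bound over the $k$ clusters then shows that, with constant probability, every cluster simultaneously meets its far-sample quota, and that within the far region of each cluster the law of a $D^2$-sample coincides with the uniform distribution over the far part of $X_i'$. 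This last coincidence is exactly the property that dictates how $X_i'$ is built, and reconciling it with the cost-transfer guarantee~(i) is the most delicate point of the proof. Because line~(5) enumerates \emph{all} disjoint $\tau$-subsets of $M$, the intended samples are among those tried; feeding them, with the correct $w_i$, into Theorem~\ref{thm:fomin} produces a center set $C$ with $\E[\Phi(C,X)]\le(1+O(\veps))\cdot OPT$. As $\Phi(C,X)\ge OPT$ always, Markov applied to the nonnegative quantity $\Phi(C,X)-OPT$ gives success probability at least a positive constant per iteration, and the $\zeta=\Theta(\tfrac1\veps)$ independent repetitions of line~(2) then amplify this to at least $\tfrac34$ after a rescaling of $\veps$.

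The running time is then pure bookkeeping. With $\alpha=O(1)$, the quantity $|M|=\eta k+\tau k^2$ is polynomial in $k$ and $\tfrac1\veps$, so the number of disjoint $\tau$-subsets $S_1,\dots,S_k$ examined in line~(5) is at most $|M|^{\tau k}=2^{O(\tau k\log|M|)}=2^{\tilde O(k^2/\veps^2)}$, using $\tau k=\tilde O(k^2/\veps^2)$ and $\log|M|=\tilde O(1)$; line~(6) multiplies in the $(\log n)^k$ weight vectors; each call to ${\tt best}_{\mathcal{R}}$ and each cost evaluation for line~(8) costs $O(nd)$ up to lower-order factors; and the $\zeta=\Theta(\tfrac1\veps)$ repetitions contribute only a constant. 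Multiplying out yields the claimed $O\big(nd\cdot(\log n)^k\cdot 2^{\tilde O(k^2/\veps^2)}\big)$ bound.
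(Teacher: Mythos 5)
Your overall skeleton matches the paper's: build auxiliary multisets $X_1',\dots,X_k'$ using $B$ as a proxy, show that near-optimal constrained centers for the primed partition are $(1+\veps)$-good for $X$, argue that line (5) tries a collection $S_1,\dots,S_k$ of $\tau$ uniform samples from the $X_j'$, and finish with Theorem~\ref{thm:fomin}, Markov, and the $\zeta$ repetitions. However, two steps at the heart of your sampling argument have genuine gaps. First, you apply the near/far split to every cluster and claim that a far point's $D^2$-weight ``relative to the total mass $\Phi(B,X)$ is bounded below.'' That fails when $\Phi(B,X_j)$ is itself a tiny fraction of $\Phi(B,X)$: your threshold is a multiple of the cluster's \emph{average} cost $\Phi(B,X_j)/|X_j|$, so a far point is only guaranteed sampling probability about $\frac{\veps}{k}\cdot\frac{\Phi(B,X_j)}{|X_j|\,\Phi(B,X)}$, which can be arbitrarily smaller than the required $\Omega\bigl(\frac{1}{|X_j|}\cdot\mathrm{poly}(\veps/k)\bigr)$. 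The paper therefore splits the indices into $I_s$ (clusters with $\Phi(B,X_j)\le\frac{\veps}{6\alpha k}\Phi(B,X)$) and $I_l$; for $j\in I_s$ the \emph{entire} cluster is replaced by $B$-proxies, so no $D^2$-sampling is needed for it and the total incurred error is at most $O(\veps)\cdot OPT$, while the near/far decomposition and the pointwise lower bound of Lemma~\ref{lem:osample} are invoked only for $j\in I_l$, where the $I_l$ condition is exactly what makes the bound $\gamma/|X_j|$ go through. Without this case split your quota argument breaks for low-cost clusters.

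Second, you assert that ``within the far region of each cluster the law of a $D^2$-sample coincides with the uniform distribution over the far part of $X_i'$,'' and you hope the construction of $X_i'$ makes this true. No construction of $X_i'$ can: $D^2$-sampling weights points in proportion to their squared distance to $B$, which varies across $X_j^{far}$, so the conditional law on the far region is not uniform. The paper proves only a pointwise \emph{lower} bound $\pr[x=p]\ge\gamma/|X_j|$ (near-uniformity from below) and then converts this to exact iid uniform samples from $X_j'$ via an explicit rejection-sampling construction (the random variables $Y_u$ and $Z_u$, following~\cite{jks}); that construction also couples each accepted far sample with a random element of $n(X_j^{near})$, so that the resulting multiset is a genuine iid sample from all of $X_j'$ rather than a concatenation of a far-sample with an independent batch of $B$-copies, as in your proposal. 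Your Chernoff-plus-union-bound step must be run on these rejection-sampled variables, not on the raw $D^2$-samples. A minor further point: Markov gives per-iteration success probability $\Omega(\veps)$, not a constant, which is precisely why $\zeta=\Theta(1/\veps)$ repetitions are needed.
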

We will need a few definitions for our analysis with respect to the given instance $J = (X, k, \mathcal{R})$.
Let $C = \{c_1, ..., c_k\}$ denote the optimal center set satisfying $\mathcal{R}$ and let $X_1, ..., X_k$ denote the corresponding clustering induced by $C$. 
That is, for every $i, X_i = \{x \in X : \arg\min_{j} \ham(c_j, x) = i\}$. 
For every $i$, let $\Delta_i \equiv \Phi(c_i, X_i) = \sum_{x \in X_i} \ham(x, c_i)$ and as before let $OPT = \sum_i \Delta_i$.
We will also need terms related to the optimal solution to the corresponding binary $k$-means instance $(X, k)$ (that is, the corresponding unconstrained instance). Let $\Cst$ denote the optimal binary $k$-means solution, let $\Xst_1, ..., \Xst_k$ denote the corresponding clustering induced by $\Cst$ and let $\OPT = \Phi(\Cst, X)$. We know the following about some of the quantities defined above and the center set $B$ given as input to the algorithm:
\begin{equation}\label{eqn:basicP}
\OPT \leq OPT \qquad \textrm{and} \qquad \Phi(B, X) \leq \alpha \cdot \OPT. 
\end{equation}
The first inequality follows from the fact that $OPT$ denotes the optimal solution to the constrained version as opposed to $\OPT$ that denotes the optimal solution to the unconstrained version. The second inequality follows from the fact that $B$ is an $\alpha$-approximate solution to the binary $k$-means instance $(X, k)$. 
The outer iteration (repeat $\zeta$ times in line (2)) is for probability amplification. 
We will show that the probability of finding a good center set in one iteration is $\Omega(\veps)$ and the theorem will follow from simple probability calculations. 
Let us now focus on a single iteration of the algorithm. 
We will assume that we know $w_1, ..., w_k$ such that $\forall i, |X_i| \leq w_i \leq (1+\frac{\veps}{2}) \cdot |X_i|$. Note that we will try all possibilities such that the inequalities hold. It will be easier to analyse assuming that we know the correct values. 
We will show that with probability at least $\Omega(\veps)$, there are disjoint (multi) subsets $S_1, ...., S_k$ of $M$ (see line (5))) such that,
\begin{equation}\label{eqn:desirable}
\Phi({\tt best}_{\mathcal{R}}(J, S_1, ..., S_k, w_1, ..., w_k), X) \leq (1 + \veps) \cdot OPT.
\end{equation}
Since we try out all possible subsets in line (5), we will obtain the desired result.
We argue in the following manner: consider the multi-set $B' = \{\tau k \textrm{ copies of each element in } B\}$. 
We can interpret $B'$ as a union of of multi-sets $B_1', ..., B_k'$, where $B_i' = \{\tau \textrm{ copies of each element in $B$}\}$. 
Also, since $M$ consists of $\eta k$ independently sampled points, we can interpret $M$ as a union of multi-sets $M_1', ..., M_k'$ where $M_i'$ is the $i^{th}$ bunch of $\eta$ points sampled.
For all $i$, let $M_i \equiv B_i' \cup M_i'|_{X_i}$ where $M_i'|_{X_j}$ denotes the set of points in the multi-set $M_i'$ (with repetition) that belongs to $X_i$.
We will show that there are subsets $S_i \subseteq M_i$ for every $i$ such that eqn. (\ref{eqn:desirable}) holds.
We state this formally as the next lemma which we will prove in the remaining section. 

\begin{lemma}\label{lemma:toshow}
Let multi-sets $M_1, ..., M_k$ be as defined above. Then
\[
\pr \left[ \exists S_1, ..., S_k \textrm{ s.t. } \forall i, S_i \subseteq M_i \textrm{ and } \forall i, |S_i| = \tau \textrm{ and } S_1,..., S_k \textrm{ satisfies eqn. (\ref{eqn:desirable})}\right] = \Omega(\veps).
\]
\end{lemma}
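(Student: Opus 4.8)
The plan is to reduce the lemma to Theorem~\ref{thm:fomin} by manufacturing, for each optimal constrained cluster $X_i$, a modified multiset $X_i'$ assembled from $X_i$ and the approximate center set $B$, with two properties: (i) \emph{goodness transfer} --- any center satisfying $\mathcal{R}$ that is good for the partition $X_1', \ldots, X_k'$ is also good for $X_1, \ldots, X_k$; and (ii) \emph{sampleability} --- a uniform draw from $X_i'$ is exactly realizable from the points populating $M_i = B_i' \cup M_i'|_{X_i}$. Concretely, I would define $X_i'$ to contain each $x \in X_i$ with multiplicity proportional (after rounding) to its $D^2$-cost $\ham(x, B) = \min_{b \in B}\ham(x, b)$, together with a controlled number of copies of the centers of $B$. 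The reason this is the right object: a uniform sample from the data part of $X_i'$ coincides with a $D^2$-sample conditioned on falling in $X_i$ --- which is precisely the distribution of the points in $M_i'|_{X_i}$ --- while a uniform sample from the dilution part is a copy of an element of $B$, supplied in abundance by $B_i'$. The separation of the $D^2$-samples into independent bunches $M_1', \ldots, M_k'$ also guarantees that the extracted subsets $S_1, \ldots, S_k$ are automatically disjoint, as required in line~(5).

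The first technical ingredient is property~(i). Here I would invoke the Hamming triangle inequality, eqn.~(\ref{eqn:triangle}), together with the constant-factor guarantee $\Phi(B, X) \leq \alpha \cdot \OPT \leq \alpha \cdot OPT$ from eqn.~(\ref{eqn:basicP}). Passing from $X_i$ to $X_i'$ reweights points by their distance to $B$ and injects copies of $B$; the resulting change in the cost of \emph{any} fixed candidate center can be bounded, via the triangle inequality, by a multiple of $\Phi(B, X_i)$, and summing over $i$ this is at most a multiple of $\alpha \cdot OPT$. Choosing the dilution level and the sample size $\tau = \Theta(\frac{k}{\veps^2}\log\frac{k}{\veps})$ appropriately forces this distortion into the $(1+\veps)$ slack. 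Feeding the partition $X_1', \ldots, X_k'$ and the (correctly guessed) weights $w_i$ into Theorem~\ref{thm:fomin} produces a center set whose expected cost on $X_1', \ldots, X_k'$ --- and therefore, by property~(i), on $X_1, \ldots, X_k$ --- is at most $(1+\veps)\cdot OPT$; a Markov step converts this into eqn.~(\ref{eqn:desirable}) holding with constant probability.

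The second ingredient, property~(ii), is where I expect the real work to be, and it is the source of the $\Omega(\veps)$ in the statement. I would establish it by a coupling argument: to emit $\tau$ i.i.d.\ samples from $X_i'$, each sample is a data point (a $D^2$-draw conditioned on $X_i$) with some probability $q_i$ and a copy of $B$ otherwise; the data draws are served from $M_i'|_{X_i}$ and the copies from $B_i'$, and the coupling succeeds precisely when $M_i'|_{X_i}$ contains at least as many points as the sample requests. The delicate point is that $|M_i'|_{X_i}|$ is binomial with mean $\eta \cdot \frac{\Phi(B, X_i)}{\Phi(B, X)}$, which is large for clusters $B$ fits poorly but can be minuscule for clusters $B$ already fits well. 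I would therefore split the clusters into heavy ones --- where the choice $\eta = \Theta(\frac{\tau \alpha k}{\veps^2})$ guarantees the mean is $\Omega(\tau)$ and a Chernoff bound supplies enough data points with high probability --- and light ones, where $X_i'$ degenerates to copies of $B$ alone and the requirement is met deterministically from $B_i'$. The main obstacle is carrying out this concentration/coupling bookkeeping simultaneously across all $k$ clusters of wildly differing $D^2$-mass while certifying that the joint success probability is $\Omega(\veps)$.

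Finally I would assemble the two ingredients. Conditioned on the correct weight guesses (which are among those enumerated in line~(6)), property~(ii) gives, with probability $\Omega(\veps)$ in a single iteration, subsets $S_1, \ldots, S_k$ of the prescribed sizes distributed as uniform samples from $X_1', \ldots, X_k'$; property~(i) and Theorem~\ref{thm:fomin} then certify that these $S_i$ satisfy eqn.~(\ref{eqn:desirable}). This is exactly the per-iteration statement of Lemma~\ref{lemma:toshow}. The outer loop of line~(2), which repeats $\zeta = \Theta(1/\veps)$ independent iterations, amplifies the probability $\Omega(\veps)$ to the constant $\tfrac{3}{4}$ of Theorem~\ref{thm:main-theorem} via $1 - (1 - \Omega(\veps))^{\zeta} \geq \tfrac{3}{4}$.
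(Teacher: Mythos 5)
There is a genuine gap, and it sits at the heart of your construction of $X_i'$. You define $X_i'$ by giving each $x \in X_i$ multiplicity proportional to $\ham(x,B)$, precisely so that a uniform draw from the data part of $X_i'$ coincides with a $D^2$-draw conditioned on $X_i$. This makes your sampleability property (ii) nearly automatic, but it destroys the goodness-transfer property (i): multiplicative reweighting of points by their distance to $B$ is not a perturbation that the triangle inequality controls. The triangle inequality bounds the cost of \emph{moving} a point to a nearby location (paying $\ham(x,n(x))$ once per point, hence $\Phi(B,X_i)$ in total); it says nothing about \emph{changing a point's multiplicity}. Concretely, take $B=\{0^d\}$ and let $X_i$ consist of $n-1$ copies of $e_1$ and one point $p$ at Hamming distance $n-1$ from $0^d$. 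Your $X_i'$ then contains $p$ with multiplicity about $n/2$, so a center optimal (or near-optimal) for $X_i'$ may be $p$ itself, which costs $\Theta(n^2)$ on $X_i$ while the true optimum costs $O(n)$ and $\Phi(B,X_i)=O(n)$; the distortion is not ``a multiple of $\Phi(B,X_i)$.'' The paper avoids this by keeping every far point at multiplicity exactly one: it splits each cluster $X_j$ with $j \in I_l$ \emph{internally} into $X_j^{near}$ and $X_j^{far}$ at radius $R_j = \frac{\veps}{9}\cdot\frac{\Phi(B,X_j)}{|X_j|}$, replaces only the near points by their proxies $n(x)\in B$, and then pays for the fact that the $D^2$-distribution on $X_j^{far}$ is \emph{not} uniform by a rejection-sampling step (the $Y_u \to Z_u$ construction), which is legal because every far point has $D^2$-probability at least $\gamma/|X_j|$ with $\gamma = \frac{\veps^2}{54\alpha k}$. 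Your proposal has the cluster-level heavy/light split (matching $I_s$ versus $I_l$) but is missing both the within-cluster near/far decomposition and the non-uniform-to-uniform conversion; these are the actual technical content of the lemma, and without them your property (i) fails.

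A secondary point: you attribute the $\Omega(\veps)$ to the coupling in property (ii) and claim the Markov step yields constant probability. It is the other way around. The sampling success probability in the paper is $(1-\frac{1}{k})^k = \Omega(1)$ (via Chernoff, per cluster, with $\eta = 2\tau/\gamma'$), whereas Markov's inequality applied to the expectation guarantee of Theorem~\ref{thm:fomin} only converts $\E[\Phi(C'',X')] \leq (1+\Theta(\veps))\sum_j\Phi(c_j',X_j')$ into the event $\sum_j\Phi(c_j'',X_j') \leq (1+\frac{\veps}{16})\sum_j\Phi(c_j',X_j')$ with probability $\Omega(\veps)$, not $\Omega(1)$; that is where the $\veps$ in the lemma comes from, and why the outer loop needs $\zeta = \Theta(1/\veps)$ repetitions. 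The final product is still $\Omega(\veps)$ under either accounting, so this does not change the statement, but it signals that the probabilistic bookkeeping has not been worked out.
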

We divide the cluster indices into two groups based on the value of $\Phi(B, X_i)$ and then do a case analysis. Let 
\[
I_s = \left\{ j : \Phi(B, X_j) \leq \frac{\veps}{6 \alpha k} \cdot \Phi(B, X) \right\} \quad \textrm{and} \quad I_l = \left\{ j : \Phi(B, X_j) > \frac{\veps}{6 \alpha k} \cdot \Phi(B, X) \right\}
\]
We will show that the good $S_i$'s as in Lemma~\ref{lemma:toshow} are such that for $j \in I_s, S_j \subseteq B_j' \subseteq M_j$ and for $j \in I_l, S_j \subseteq M_j$.
The analysis share similarity with the analysis of the $D^2$-sampling based algorithm in the context of the standard $k$-means problem of Bhattacharya \etal~\cite{bjk} and Goyal \etal~\cite{gjk}.
However, there are significant deviations and the arguments have to be adapted to the current setting. 
The main difference is because of the fact that in the context of the standard $k$-means problem, a uniform sample from a cluster was sufficient to obtain a good center from that cluster.
So, one could argue approximation guarantee {\em cluster-wise}.
In the current context, one needs to argue simultaneously with respect to all clusters. 
Even though some parts of the proof may be similar to the previous works~\cite{bjk,gjk}, there are significant differences and we give the detailed proofs here.

Let us consider the index set $I_s$ first. 
Let $j$ be any index in the set $I_s$. 
We will show that there is a set $X_j'$ consisting only of elements in the set $B$ such that a uniform sample from $X_j'$ (along with similar samples from other $X_i$'s) will give a good center-set.
For any point $x \in X$, let $n(x)$ denote the center in the set $B$ that is closest to $x$. That is, $n(x) \equiv \arg\min_{b \in B} \ham(x, b)$. 
Let us define the multi-set $X_j'$ as 
\begin{equation}\label{eqn:X_j'1}
X_j' = \{n(x) : x \in X_j\} \quad \textrm{ for every $j \in I_s$}
\end{equation}
That is, we take the nearest centers of elements in $X_j$ with appropriate multiplicities to construct $X_j'$. 
The intuition behind constructing the set $X_j'$ is that since the cost of $X_j$ with respect to center-set $B$ is very small, the points in the set $X_j$ are close to the centers in $B$ and hence the centers in $B$ can act as ``proxy" for the points in the set $X_j$. Obtaining a uniform sample from $X_j'$ is much simpler since we consider the set $B_j'$ that has appropriate number of copies from the set $B$.

Now constructing a similar set for an index $j \in I_l$, is a bit more involved. 
Since the cost of $X_j$ for any $j \in I_l$ is not small (as opposed to indices in $I_s$), the points from the set $B$ alone cannot act as proxy for the points in the set $X_j$. On the other hand, if we sample using $D^2$-sampling w.r.t. set $B$, then all the points in $X_j$ that are far from centers in $B$ will have a good chance of being sampled. However, the same is not true for points in $X_j$ that are close to centers in $B$. 
So, what we need to do is to consider a partition of the points in $X_j$ into {\em near} points and {\em far-away} points. 
The far-away points have a good chance of being sampled in line (3) and the centers in $B$ can act as proxy for near points. 
We define the set $X_j'$ for $j \in I_l$ more formally now. 
The closeness of point in $X_j$ to points in $B$ is quantified using radius $R_j$ that is defined by the equation:
\[
R_j \equiv \frac{\veps}{9} \cdot \frac{\Phi(B, X_j)}{|X_j|}
\]
Let $X_j^{near}$ be points in $X_j$ that are within distance $R_j$ from a center in $B$ and $X_j^{far}$ denote the remaining points. 
That is, 
\[
X_j^{near} \equiv \{x \in X_j : \min_{b\in B} \ham(x, b) \leq R_j\} \qquad \textrm{and} \qquad X_{j}^{far} \equiv X_j \setminus X_{j}^{near}.
\]
Using these, we define the multi-set $X_j'$ as:
\begin{equation}\label{eqn:X_j'2}
X_j' \equiv X_j^{far} \cup \{n(x) : x \in X_j^{near}\} \quad \textrm{ for every $j \in I_l$}.
\end{equation}
Note that $|X_j| = |X_j'|$. Let $n_j = |X_j|$ and $\bar{n}_j = |X_j^{near}|$.
Having defined the sets $X_1', ..., X_k'$ corresponding to $X_1, ..., X_k$ in eqn. (\ref{eqn:X_j'1}) and (\ref{eqn:X_j'2}), we will now try to show that a good center set for $X_1', ..., X_k'$ will also be good for $X_1, ..., X_k$. 
Let $C' = \{c_1', ..., c_k'\}$ be a center set such that $C'$ satisfies $\mathcal{R}$ and $C'$ minimises the cost $\sum_{i=1}^{k} \Phi(c_i', X_i')$.
The next few lemmas will be useful in the analysis.

\begin{lemma}\label{lemma:inter2}
For any $j \in I_l$, $\Delta_j \geq \frac{4 \bar{n}_j}{\veps} \cdot R_j$.
\end{lemma}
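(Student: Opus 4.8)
The plan is to lower-bound $\Delta_j$ in two independent ways using the triangle inequality (\ref{eqn:triangle}) and then glue them together with a case split on how close the optimal constrained center $c_j$ sits to the approximate center set $B$. First I would unwind the definition of $R_j$ to see what is really being asked: since $R_j = \frac{\veps}{9}\cdot\frac{\Phi(B,X_j)}{n_j}$, the target $\frac{4\bar n_j}{\veps}R_j$ equals $\frac{4\bar n_j}{9 n_j}\Phi(B,X_j)$, so it suffices to prove $\Delta_j \ge \frac{4\bar n_j}{9 n_j}\Phi(B,X_j)$. I will write $\rho \equiv \min_{b\in B}\ham(c_j,b)$ for the distance from $c_j$ to its nearest center in $B$; the whole argument will hinge on comparing $\rho$ with $\tfrac12\cdot\frac{\Phi(B,X_j)}{n_j}=\frac{9R_j}{2\veps}$.

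Next I would establish the two lower bounds. For the first, let $b^{\star}\in B$ attain $\ham(c_j,b^{\star})=\rho$; for every $x\in X_j$ the triangle inequality gives $\min_{b\in B}\ham(x,b)\le \ham(x,b^{\star})\le \ham(x,c_j)+\rho$, and summing over all of $X_j$ yields
\[
\Delta_j \ \ge\ \Phi(B,X_j) - n_j\,\rho \qquad (\mathrm{A}).
\]
For the second, I restrict attention to the near points: for $x\in X_j^{near}$ with nearest center $n(x)\in B$ we have $\ham(x,n(x))\le R_j$, and the triangle inequality gives $\rho\le \ham(c_j,n(x))\le \ham(x,c_j)+\ham(x,n(x))\le \ham(x,c_j)+R_j$, so $\ham(x,c_j)\ge \rho-R_j$. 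Summing over the $\bar n_j$ near points gives
\[
\Delta_j \ \ge\ \bar n_j\,(\rho - R_j) \qquad (\mathrm{B}).
\]

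Now the case analysis. If $\rho\le \tfrac12\cdot\frac{\Phi(B,X_j)}{n_j}$, then $(\mathrm{A})$ gives $\Delta_j\ge \Phi(B,X_j)-n_j\rho\ge \tfrac12\Phi(B,X_j)\ge \frac{4\bar n_j}{9 n_j}\Phi(B,X_j)$, the last step using $\bar n_j\le n_j$ and $\tfrac12\ge \tfrac49$. If instead $\rho> \tfrac12\cdot\frac{\Phi(B,X_j)}{n_j}=\frac{9R_j}{2\veps}$, then using $\veps\le\tfrac12$ I get $\rho-R_j> R_j\big(\tfrac{9}{2\veps}-1\big)\ge \tfrac{4R_j}{\veps}$ (this last inequality is equivalent to $\tfrac{1}{2\veps}\ge1$, i.e. to $\veps\le\tfrac12$), and plugging into $(\mathrm{B})$ gives $\Delta_j\ge \bar n_j(\rho-R_j)> \frac{4\bar n_j}{\veps}R_j$. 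Both cases deliver the claim, so I would finish by noting that when $\bar n_j=0$ the statement is trivial.

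The main obstacle, and the reason a single estimate does not work, is that there is no direct per-cluster comparison between $\Delta_j$ (the optimal \emph{constrained} cost of $X_j$) and $\Phi(B,X_j)$ (the cost charged by $B$): the set $B$ need not contain any center close to $c_j$. The case split is precisely what resolves this tension. When $B$ is far from $c_j$, bound $(\mathrm{A})$ degrades, but then every near point of $X_j$ is forced to be expensive for $c_j$, making $(\mathrm{B})$ strong; when $B$ is close to $c_j$, $(\mathrm{A})$ alone already suffices. I expect the only delicate part to be pinning down the threshold constant $\tfrac12$ so that both branches close simultaneously and the hypothesis $\veps\le\tfrac12$ is used exactly where needed in the second branch.
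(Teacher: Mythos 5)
Your proof is correct and follows essentially the same route as the paper: the same two triangle-inequality estimates (one comparing $\Phi(B,X_j)$ to $\Delta_j$ via the center of $B$ nearest to $c_j$, one lower-bounding the cost of the near points) combined with a case split on the distance from $c_j$ to $B$. The only difference is the threshold for the split ($\tfrac{9R_j}{2\veps}$ rather than the paper's $\tfrac{5R_j}{\veps}$), which is why you need $\veps\le\tfrac12$ where the paper only needs $\veps\le1$; both choices close the argument.
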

\begin{proof}
Let $b = \arg\min_{b \in B}{\ham(c_j, b)}$. We do a case analysis:
\begin{enumerate}
\item \underline{Case 1}: $\ham(c_j, b) \geq \frac{5}{\veps} \cdot R_j$\\
In this case, consider any point $p \in X_j^{near}$. From the triangle inequality, we have 
\[
\ham(p, c_j) \geq \ham(n(p), c_j) - \ham(n(p), p) \geq \frac{5}{\veps} \cdot R_j - R_j \geq \frac{4}{\veps} \cdot R_j.
\]
This gives $\Delta_j \geq \sum_{p \in X_j^{near}} \ham(p, c_j) \geq \frac{4 \bar{n}_j}{\veps} \cdot R_j$.

\item \underline{Case 2}: $\ham(c_j, b) < \frac{5}{\veps} \cdot R_j$\\
In this case, we have from triangle inequality:
\[
\Delta_j \geq \Phi(b, X_j) - n_j \cdot \ham(b, c_j) \geq \Phi(B, X_j) - n_j \cdot \ham(b, c_j) \geq \frac{9 n_j}{\veps} \cdot R_j - \frac{5n_j}{\veps} \cdot R_j \geq \frac{4 \bar{n}_j}{\veps} \cdot R_j.
\]
This completes the proof of the lemma.\qed
\end{enumerate}
\end{proof}

\begin{lemma}\label{lemma:combined}
$\sum_{j=1}^k \Phi(c_j', X_j') \leq OPT + \sum_{j \in I_s} \Phi(B, X_j) + \sum_{j \in I_l} \bar{n}_j R_j$.
\end{lemma}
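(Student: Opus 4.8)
The plan is to exploit the fact that $C' = \{c_1', \dots, c_k'\}$ is by definition the \emph{cheapest} center set satisfying $\mathcal{R}$ for the modified clusters $X_1', \dots, X_k'$. Since the optimal constrained center set $C = \{c_1, \dots, c_k\}$ also satisfies $\mathcal{R}$, it is a feasible competitor, so immediately
\[
\sum_{j=1}^{k} \Phi(c_j', X_j') \;\le\; \sum_{j=1}^{k} \Phi(c_j, X_j').
\]
This reduces the lemma to charging $\sum_j \Phi(c_j, X_j')$ against $OPT = \sum_j \Delta_j = \sum_j \Phi(c_j, X_j)$ plus the two error terms. The whole argument is then a per-cluster comparison of $\Phi(c_j, X_j')$ with $\Phi(c_j, X_j)$, split according to whether $j \in I_s$ or $j \in I_l$.

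For $j \in I_s$, every point $x \in X_j$ is replaced in $X_j'$ by its nearest center $n(x) \in B$. I would apply the triangle inequality $\ham(n(x), c_j) \le \ham(n(x), x) + \ham(x, c_j)$ and sum over the multiset $X_j$. Since $\ham(n(x), x) = \min_{b \in B}\ham(x,b)$, these terms sum to $\Phi(B, X_j)$, while $\sum_{x \in X_j}\ham(x, c_j) = \Delta_j$, giving $\Phi(c_j, X_j') \le \Delta_j + \Phi(B, X_j)$.

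For $j \in I_l$, the far points $X_j^{far}$ are carried over unchanged and contribute exactly $\sum_{x \in X_j^{far}}\ham(x, c_j)$, whereas each near point $x \in X_j^{near}$ is replaced by $n(x)$. Here the same triangle inequality applies, but now $\ham(n(x), x) = \min_{b \in B}\ham(x,b) \le R_j$ by the very definition of $X_j^{near}$; summing the $\bar{n}_j$ near points gives an extra $\bar{n}_j R_j$, so $\Phi(c_j, X_j') \le \Delta_j + \bar{n}_j R_j$. Adding the $I_s$ and $I_l$ bounds over all clusters yields $\sum_j \Phi(c_j, X_j') \le OPT + \sum_{j \in I_s}\Phi(B,X_j) + \sum_{j \in I_l}\bar{n}_j R_j$, and the claim follows from the minimality inequality above.

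There is no serious obstacle: the proof is a routine two-case triangle-inequality estimate. The only points that need care are (i) remembering that $X_j'$ is a \emph{multiset}, so the sums must be taken with the correct multiplicities inherited from $X_j$, and (ii) invoking the feasibility of $C$ --- it is precisely because $C$ satisfies $\mathcal{R}$ that it can serve as a competitor against the optimal $C'$. Lemma~\ref{lemma:inter2} is not needed for this estimate; it is presumably used afterwards to convert the $\bar{n}_j R_j$ terms into a genuine $(1+\veps)$-factor overhead on $OPT$.
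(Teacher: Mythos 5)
Your proposal is correct and follows essentially the same route as the paper: the minimality of $C'$ over center sets satisfying $\mathcal{R}$ reduces the claim to bounding $\sum_j \Phi(c_j, X_j')$, and then the same per-cluster triangle-inequality charging is used, with $\sum_{x\in X_j}\ham(n(x),x)=\Phi(B,X_j)$ for $j\in I_s$ and $\ham(n(x),x)\le R_j$ for the $\bar{n}_j$ near points when $j\in I_l$. Your closing remarks about multiset multiplicities, the feasibility of $C$, and Lemma~\ref{lemma:inter2} being needed only later are all accurate.
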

\begin{proof}
The proof follows from the following inequalities:
\begin{eqnarray*}
\sum_j \Phi(c_j', X_j') &\leq& \sum_j \Phi(c_j, X_j') \qquad \textrm{(since $c_1', ..., c_k'$ are optimal for $X_1', ..., X_k'$)}\\
&=& \sum_{j \in I_s} \Phi(c_j, X_j') + \sum_{j \in I_l} \Phi(c_j, X_j') \\
&=& \sum_{j \in I_s} \sum_{x \in X_j} \ham(n(x), c_j) + \sum_{j \in I_l} \Phi(c_j, X_j') \qquad \textrm{(using defn. of $X_j'$)}\\
&\leq& \sum_{j \in I_s} \sum_{x \in X_j} (\ham(n(x), x) + \ham(x, c_j)) + \sum_{j \in I_l} \Phi(c_j, X_j') \qquad \textrm{(using triangle inequality)}\\
&\leq& \sum_{j \in I_s} (\Phi(B, X_j) + \Delta_j) + \sum_{j \in I_l} \Phi(c_j, X_j') \\
&=& \sum_{j \in I_s} (\Phi(B, X_j) + \Delta_j) + \sum_{j \in I_l} \left( \sum_{x \in X_j^{near}} \ham(n(x), c_j) + \sum_{x \in X_j^{far}} \ham(x, c_j) \right) \\
&& \qquad \textrm{(using defn. of $X_j^{near}$ and $X_j^{far}$)}\\
&\leq&  \sum_{j \in I_s} (\Phi(B, X_j) + \Delta_j) + \sum_{j \in I_l} \left( \sum_{x \in X_j^{near}} (\ham(n(x), x) + \ham(x, c_j)) + \sum_{x \in X_j^{far}} \ham(x, c_j)\right)\\
&& \qquad \textrm{(using triangle inequality)}\\
&=& \sum_{j \in I_s} (\Phi(B, X_j) + \Delta_j) + \sum_{j \in I_l} \left( \bar{n}_j R_j + \sum_{x \in X_j} \ham(x, c_j) \right)\\
&=& \sum_{j \in I_s} (\Phi(B, X_j) + \Delta_j) + \sum_{j \in I_l} \left( \bar{n}_j R_j + \Delta_j \right)\\
&=& OPT + \sum_{j \in I_s} \Phi(B, X_j) + \sum_{j \in I_l} \bar{n}_j R_j.
\end{eqnarray*}
This completes the proof of the lemma.\qed
\end{proof}

Let $C'' = \{c_1'', ..., c_k''\}$ be a center set that satisfies $\mathcal{R}$ such that 
\begin{equation}\label{eqn:inter1}
\sum_{i=1}^{k} \Phi(c_i'', X_i') \leq \left(1+\frac{\veps}{16} \right) \cdot \sum_{i=1}^{k} \Phi(c_i', X_i').
\end{equation}
We will now show that $C''$ is a good center set for $X$.

\begin{lemma}\label{lemma:main-lem}
$\Phi(C'', X) \leq (1 + \veps) \cdot OPT$.
\end{lemma}
\begin{proof}
The proof follows from the following sequence of inequalities: 
\begin{eqnarray*}
\Phi(C'', X) &\leq& \sum_{j=1}^{k} \Phi(c_j'', X_j) = \sum_{j=1}^{k} \sum_{x \in X_j} \ham(x, c_j'')\\
&=& \sum_{j \in I_s} \sum_{x \in X_j} \ham(x, c_j'') + \sum_{j \in I_l} \sum_{x \in X_j} \ham(x, c_j'') \\
&\leq& \sum_{j \in I_s} \sum_{x \in X_j} \left( \ham(x, n(x)) + \ham(n(x), c_j'') \right) + \sum_{j \in I_l} \sum_{x \in X_j} \ham(x, c_j'') \qquad \textrm{(using triangle inequality)}\\
&\leq& \sum_{j \in I_s} \Phi(B, X_j) + \sum_{j \in I_s} \sum_{x \in X_j'} \ham(x, c_j'') + \sum_{j \in I_l} \sum_{x \in X_j} \ham(x, c_j'') \\
&=& \sum_{j \in I_s}\Phi(B, X_j) + \sum_{j \in I_s} \sum_{x \in X_j'} \ham(x, c_j'')  + \sum_{j \in I_l} \left( \sum_{x \in X_j^{near}} \ham(x, c_j'') + \sum_{x \in X_j^{far}} \ham(x, c_j'')\right) \\
&\leq& \sum_{j \in I_s} \Phi(B, X_j) + \sum_{j \in I_s} \sum_{x \in X_j'} \ham(x, c_j'')  + \sum_{j \in I_l} \left( \sum_{x \in X_j^{near}} \left(\ham(x, n(x)) + \ham(n(x), c_j'') \right) + \sum_{x \in X_j^{far}} \ham(x, c_j'')\right) \\
&& \textrm{(using triangle inequality)}\\
&\leq& \sum_{j \in I_s} \Phi(B, X_j) + \sum_{j \in I_s} \sum_{x \in X_j'} \ham(x, c_j'')  + \sum_{j \in I_l} \left( \bar{n}_j R_j + \sum_{x \in X_j'} \ham(x, c_j'')\right) \\
&\leq& \sum_{j \in I_s} \Phi(B, X_j) + \sum_{j \in I_l} \bar{n}_j R_j + \sum_{j=1}^{k} \Phi(c_j'', X_j')\\
&\leq& \sum_{j \in I_s} \Phi(B, X_j) + \sum_{j \in I_l} \bar{n}_j R_j + \left( 1 + \frac{\veps}{16}\right) \cdot \sum_{j=1}^{k} \Phi(c_j', X_j')\\
%
%
&\leq& \sum_{j \in I_s} \Phi(B, X_j) + \sum_{j \in I_l} \bar{n}_j R_j + \left( 1 + \frac{\veps}{16}\right) \cdot \left( OPT + \sum_{j \in I_s} \Phi(B, X_j) + \sum_{j \in I_l} \bar{n}_j R_j \right)\\
&& \textrm{(using Lemma~\ref{lemma:combined})}\\
&=& \left(2 + \frac{\veps}{16}\right) \cdot \sum_{j \in I_s} \Phi(B, X_j) + \left(2 + \frac{\veps}{16}\right) \cdot \sum_{j \in I_l} \bar{n}_j R_j +  \left(1 + \frac{\veps}{16}\right) \cdot OPT\\
&\leq& \left(2 + \frac{\veps}{16}\right) \cdot \frac{\veps}{6} \cdot OPT + \left(2 + \frac{\veps}{16}\right) \cdot \sum_{j \in I_l} \bar{n}_j R_j +  \left(1 + \frac{\veps}{16}\right) \cdot OPT \qquad \textrm{(using the defn. of $I_s$)}\\
&\leq& \left(2 + \frac{\veps}{16}\right) \cdot \frac{\veps}{6} \cdot OPT + \left(2 + \frac{\veps}{16}\right) \cdot \sum_{j \in I_l} \frac{\veps}{4} \cdot \Delta_j +  \left(1 + \frac{\veps}{16}\right) \cdot OPT \qquad \textrm{(using Lemma~\ref{lemma:inter2})}\\
&\leq& (1 + \veps) \cdot OPT
\end{eqnarray*}
This completes the proof of the lemma.\qed
\end{proof}

So, now let us focus on obtaining a center set $C'' = \{c_1'', ...., c_k''\}$ that satisfies $\mathcal{R}$ and that satisfies $\sum_{j=1}^{k} \Phi(c_j'', X_j') \leq (1+\frac{\veps}{16}) \cdot \sum_{j=1}^{k} \Phi(c_j', X_j')$.
The sampling result (Theorem~\ref{thm:fomin}) of Fomin \etal~\cite{fomin18} tells us that for this bound to hold, it is sufficient to obtain subsets $T_1, ..., T_k$ such that for every $j$, $|T_j| = \tau$ and $T_j$ is a uniformly sampled set from the multi-set $X_j'$. 
We will argue that at least one of the sets $S_1, ..., S_k$ considered by the algorithm in line (5) satisfies this condition. 
For indices in the set $I_s$, this is easy to argue. 
This is because for any $j \in I_s$, $X_j'$ consists of only points from the set $B$ and since we take $\tau$ copies of every element of $B$, one of the sets $S_j$ considered in line (5) will be a uniform sample from $X_j'$.
However, for indices in $I_l$ it becomes a bit tricky. 
For any $j \in I_l$, $X_j'$ consists of points from the set $B$ and $X_j^{far}$. 
We will argue that every point in $X_j^{far}$ has some minimum probability of being sampled. Then we will argue that due to adequate oversampling and adequate number of copies of $B$ in the set $M_j = B_j' \cup M_j'|_{X_j}$, one of the subsets $S_j$ in line (5) will be a uniform sample from the set $X_j'$.

Toward this, the first observation we make is that the probability of sampling an element from $X_j^{far}$ for $j \in I_l$ is reasonably large (proportional to $\frac{\veps}{k}$). 
Using this fact, we show how to sample from $X_j'$ (almost uniformly). 
Finally, we show how to convert this almost uniform sampling to uniform sampling (at the cost of increasing the size of sample).

\begin{lemma}
\label{lem:osample}
Let $j \in I_l$.
Let $x$ be a sample from $D^2$-sampling w.r.t. $B$.
Then, $\pr[x \in X_j^{far}] \geq \frac{\eps}{8 \alpha k}$.
Further, for any point $p \in X_j^{far}$, $\pr[x=p] \geq \frac{\gamma}{|X_j|}$, where $\gamma = \frac{\veps^2}{54 \alpha k}$.
\end{lemma}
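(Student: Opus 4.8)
The plan is to read off both claims directly from the definition of the $D^2$-sampling distribution together with the definitions of $R_j$, $X_j^{far}$, and the index set $I_l$. Recall that under $D^2$-sampling with respect to $B$, a point $x \in X$ is selected with probability $\min_{b \in B}\ham(x,b)/\Phi(B,X)$, so that for any subset $S \subseteq X$ one has $\pr[x \in S] = \Phi(B,S)/\Phi(B,X)$, where $\Phi(B,S) = \sum_{y \in S}\min_{b \in B}\ham(y,b)$. Both parts of the lemma will reduce to a single short calculation once this is in place.

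For the first claim, I would begin by bounding the sampling mass that the near points contribute. Since every point of $X_j^{near}$ lies within distance $R_j$ of some center in $B$, we get $\Phi(B, X_j^{near}) \leq \bar{n}_j R_j \leq |X_j| R_j = \frac{\veps}{9}\,\Phi(B, X_j)$, using the definition $R_j = \frac{\veps}{9}\cdot\frac{\Phi(B,X_j)}{|X_j|}$. Consequently $\Phi(B, X_j^{far}) = \Phi(B, X_j) - \Phi(B, X_j^{near}) \geq \left(1 - \frac{\veps}{9}\right)\Phi(B, X_j)$, and therefore $\pr[x \in X_j^{far}] = \Phi(B, X_j^{far})/\Phi(B,X) \geq \left(1 - \frac{\veps}{9}\right)\Phi(B, X_j)/\Phi(B,X)$. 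The membership $j \in I_l$ supplies $\Phi(B,X_j)/\Phi(B,X) > \frac{\veps}{6\alpha k}$, and for $\veps \leq \frac12$ we have $1 - \frac{\veps}{9} \geq \frac34$; multiplying the two factors gives $\pr[x \in X_j^{far}] > \frac34\cdot\frac{\veps}{6\alpha k} = \frac{\veps}{8\alpha k}$, which is exactly the required bound.

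For the second claim, I would note that for a single point $p$ the sampling probability is $\pr[x=p] = \min_{b}\ham(p,b)/\Phi(B,X)$. Since $p \in X_j^{far}$, by definition $\min_b \ham(p,b) > R_j = \frac{\veps}{9}\cdot\frac{\Phi(B,X_j)}{|X_j|}$. Substituting this and again invoking $\Phi(B,X_j) > \frac{\veps}{6\alpha k}\,\Phi(B,X)$ from $j \in I_l$, the common factor $\Phi(B,X)$ cancels and leaves $\pr[x=p] > \frac{\veps}{9}\cdot\frac{\veps}{6\alpha k}\cdot\frac{1}{|X_j|} = \frac{\veps^2}{54\alpha k}\cdot\frac{1}{|X_j|} = \frac{\gamma}{|X_j|}$, matching the stated constant $\gamma = \frac{\veps^2}{54\alpha k}$.

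There is no substantive obstacle here: each claim is a one-line estimate flowing from the definitions, and the only point requiring mild care is verifying that $1 - \frac{\veps}{9} \geq \frac34$ over the admissible range $\veps \leq \frac12$, so that the advertised constants $\frac18$ and $\frac{1}{54}$ emerge cleanly. The role of this lemma is purely preparatory, namely to guarantee both an aggregate and a per-point lower bound on the chance of drawing a far-away point of $X_j$, which is what the subsequent oversampling argument needs in order to extract a near-uniform (and then genuinely uniform) sample from $X_j'$ for indices $j \in I_l$.
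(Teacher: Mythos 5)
Your proposal is correct and follows essentially the same route as the paper: bound the $D^2$-sampling mass of $X_j^{near}$ by $|X_j|R_j = \frac{\veps}{9}\Phi(B,X_j)$, subtract it from the mass of $X_j$ and invoke $j \in I_l$ for the first claim, and combine $\min_b \ham(p,b) \geq R_j$ with the $I_l$ condition for the per-point bound. The constants work out identically, so there is nothing to add.
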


\begin{proof}
Note that $\sum_{p \in X_j^{near}} \pr[x=p] \leq \frac{R_j}{\Phi(B, X)} \cdot |X_j| \leq \frac{\eps}{9} \cdot \frac{\Phi(B, X_j)} {\Phi(B, X)}$.
Using the fact that $j \in I_l$, we have:
$$\pr[x \in X_j^{far}] \geq \pr[x \in X_j] - \pr[x \in X_j^{near}] \geq \frac{\Phi(B, X_j)}{\Phi(B, X)} - \frac{\eps}{9} \cdot \frac{\Phi(B, X_j)} {\Phi(B, X)} \geq \frac{\eps}{8 \alpha k}.$$

\noindent
Also, if $x \in X_j^{far}$, then $\Phi(B, \{x\}) \geq R_j=\frac{\eps}{9} \cdot \frac{\Phi(B, X_j)}{|X_j|}$.
Therefore,
$$\frac{\Phi(B, \{x\})}{\Phi(B, X)}  \geq \frac{\eps}{6 \alpha k} \cdot \frac{R_j}{\Phi(B, X_j)} \geq \frac{\veps}{6 \alpha k} \cdot \frac{\veps}{9} \cdot \frac{1}{|X_j|} \geq \frac{\veps^2}{54 \alpha k} \cdot \frac{1}{|X_j|}.
$$
This completes the proof of the lemma.
\qed
\end{proof}

Let $O_1, \ldots O_{\eta}$ be $\eta$ points sampled independently using $D^2$-sampling w.r.t. $B$.
We construct a new set of random variables $Y_1, \ldots, Y_{\eta}$.
Each variable $Y_u$ will depend on $O_u$ only, and will take values either in $X_j'$ or will be $\nl$.
These variables are defined as follows: if $O_u \notin X_j^{far}$, we set $Y_u$ to  $\nl$. 
Otherwise, we assign $Y_u$ to one of the following random variables with equal probability:
(i) $O_u$ or (ii) a random element of the multi-set $n(X_j^{near})$ defined as $n(X_j^{near}) \equiv \{b \in B : x \in X_j^{near} \textrm{ and } n(x) = b\}$.
The following observation follows from Lemma~\ref{lem:osample}.

\begin{corollary}
\label{cor:osample}
Let $j \in I_l$. For a fixed index $u$, and an element $x \in X_j'$, $\pr[Y_u=x] \geq \frac{\gamma'}{|X_j'|},$ where $\gamma'=\gamma/2$.
\end{corollary}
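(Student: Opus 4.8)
The plan is to split the elements of the multi-set $X_j' = X_j^{far} \cup \{n(x) : x \in X_j^{near}\}$ according to the part from which they came, and to bound $\pr[Y_u = x]$ separately in each case using the two estimates supplied by Lemma~\ref{lem:osample}. Recall that $|X_j'| = n_j$, that the ``proxy'' part $\{n(x): x \in X_j^{near}\}$ is a multi-set with exactly $\bar{n}_j$ copies, and that the far points are disjoint from the proxy points (the latter lie in $B$, while every point of $X_j^{far}$ is at distance $> R_j > 0$ from all of $B$), so a given $x \in X_j'$ falls into exactly one of the two cases. Throughout I would use that $Y_u$ depends only on $O_u$ together with the independent fair coin deciding between options (i) and (ii).

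First, for $x \in X_j^{far}$: the only way to obtain $Y_u = x$ is to have $O_u = x$ and then select option (i), which happens with probability $\tfrac12$. By the per-point bound of Lemma~\ref{lem:osample}, $\pr[O_u = x] \ge \gamma/|X_j|$, so $\pr[Y_u = x] \ge \tfrac12 \cdot \gamma/|X_j| = \gamma'/|X_j'|$, matching the claimed bound exactly.

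Next, for a fixed copy $x$ in the proxy part: to obtain $Y_u = x$ we need $O_u \in X_j^{far}$, then option (ii) (probability $\tfrac12$), and then that this particular copy is the uniformly chosen element of the size-$\bar{n}_j$ multi-set $n(X_j^{near})$ (probability $1/\bar{n}_j$). Using the aggregate bound $\pr[O_u \in X_j^{far}] \ge \tfrac{\veps}{8\alpha k}$ from Lemma~\ref{lem:osample} gives $\pr[Y_u = x] \ge \tfrac{\veps}{16\alpha k\, \bar{n}_j}$. It then remains to check this is at least $\gamma'/n_j = \tfrac{\veps^2}{108\alpha k\, n_j}$; after cancelling the common factor $\tfrac{\veps}{\alpha k}$ this reduces to $\tfrac{108}{16}\cdot \tfrac{n_j}{\bar{n}_j} \ge \veps$, which holds since $\bar{n}_j \le n_j$ and $\veps \le \tfrac12$.

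The two cases together prove the corollary. There is no real obstacle here; the only subtlety is the bookkeeping in the second case, where I must use the \emph{aggregate} probability $\pr[O_u\in X_j^{far}]$ rather than a per-point estimate (any far sample can trigger a proxy pick), and then absorb the extra factor $1/\bar{n}_j$ by trading against $n_j \ge \bar{n}_j$. The slack from $\veps \le \tfrac12$ is precisely what makes the constant $\gamma' = \gamma/2$ go through in this second case, while the first case already achieves the bound with equality.
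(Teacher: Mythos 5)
Your proof is correct and follows essentially the same two-case argument as the paper: the far case combines the per-point bound of Lemma~\ref{lem:osample} with the probability $\tfrac12$ of option (i), and the proxy case combines the aggregate bound $\pr[O_u\in X_j^{far}]\ge\tfrac{\veps}{8\alpha k}$ with the $\tfrac{1}{2\bar{n}_j}$ chance of picking that copy, then absorbs the constants using $\bar{n}_j\le n_j$ and $\veps\le\tfrac12$ (the paper phrases this as $|n(X_j^{near})|\le|X_j'|$, which is the same trade). No gaps.
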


\begin{proof}
If $x \in X_j^{far}$, then we know from Lemma~\ref{lem:osample} that $O_u$ is $x$ with probability at least $\frac{\gamma}{|X_j'|}$ (note that
$X_j'$ and $X_j$ have the same cardinality). 
Conditioned on this event, $Y_u$ will be equal to $O_u$ with probability $1/2$.
Now suppose $x \in n(X_j^{near})$. Lemma~\ref{lem:osample} implies that $O_u$ is an element of $X_j^{far}$ with probability at least $\frac{\eps}{8 \alpha k}$.
Conditioned on this event, $Y_u$ will be equal to $x$ with probability at least $\frac{1}{2} \cdot \frac{1}{|n(X_j^{near})|}$. 
Therefore, the probability that $O_u$ is equal to $x$ is at least $\frac{\eps}{8 \alpha k} \cdot \frac{1}{2|n(X_j^{near})|} \geq \frac{\eps}{16 \alpha k |X_j'|} \geq \frac{\gamma'}{|X_j'|}$.
\qed
\end{proof}

Corollary~\ref{cor:osample} shows that we can obtain samples from $X_j'$ which are nearly uniform (up to a constant factor).
To convert this to a set of uniform samples, we use the idea of~\cite{jks}.
For an element $x \in X_j'$, let $\gamma_x$ be such that $\frac{\gamma_x}{|X_j'|}$ denotes the probability that the random variable $Y_u$ is equal to $x$ (note that this is independent of $u$).
Corollary~\ref{cor:osample} implies that $\gamma_x \geq \gamma'$.
We define a new set of independent random variables $Z_1, \ldots, Z_{\eta}$.
The random variable $Z_u$ will depend on $Y_u$ only.
If $Y_u$ is $\nl$, $Z_u$ is also $\nl$.
If $Y_u$ is equal to $x \in X_j'$, then $Z_u$ takes the value $x$ with probability $\frac{\gamma'}{\gamma_x}$, and $\nl$ with the remaining probability.
\lv{Note that $Z_u$ is either $\nl$ or one of the elements of $\ostp{\ti}$.
Further, conditioned on the latter event, it is a uniform sample from $\ostp{\ti}$.}
We can now prove the key lemma.

\begin{lemma}\label{lem:final}
Let $j \in I_l$. Let $\eta$ be $\frac{2 \tau}{\gamma'}$, and $\mathcal{Z}$ denote the multi set consisting of the non-null samples from $Z_1, \ldots, Z_{\eta}$. Then, with probability at least $(1-\frac{1}{k})$, the following holds (i) $|\mathcal{Z}| \geq \tau$ and (ii) $\mathcal{Z}$ is an iid sample from $X_j'$.
\end{lemma}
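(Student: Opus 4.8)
The plan is to show that the rejection step in the definition of $Z_u$ exactly equalizes the sampling probabilities, so that each non-null $Z_u$ is a genuine uniform sample from $X_j'$, and then to control the number of such non-null samples via a Chernoff bound. The two conditions of the lemma then split cleanly: condition (ii) is a deterministic consequence of the construction, while condition (i) is the only event that can fail.

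First I would establish the uniformity. For any element $x \in X_j'$, by the definition of $Z_u$ and of $\gamma_x$,
\[
\pr[Z_u = x] \;=\; \pr[Y_u = x]\cdot \frac{\gamma'}{\gamma_x} \;=\; \frac{\gamma_x}{|X_j'|}\cdot \frac{\gamma'}{\gamma_x} \;=\; \frac{\gamma'}{|X_j'|},
\]
which is independent of $x$. (Corollary~\ref{cor:osample} is exactly what guarantees $\gamma_x \geq \gamma'$, so that $\frac{\gamma'}{\gamma_x}\le 1$ is a legitimate probability.) Summing over all $|X_j'|$ positions of the multi-set gives $\pr[Z_u \neq \nl] = \gamma'$, and conditioned on $Z_u \neq \nl$ the value $Z_u$ is uniform over $X_j'$.

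Next, since $O_1,\ldots,O_{\eta}$ are sampled independently, the variables $Z_1,\ldots,Z_{\eta}$ are independent as well. Letting $N = |\mathcal{Z}|$ denote the number of non-null $Z_u$, it follows that $N$ is a sum of $\eta$ independent $\mathrm{Bernoulli}(\gamma')$ variables, hence binomially distributed with $\E[N] = \eta\gamma' = 2\tau$. Conditioning only on the set of indices $u$ for which $Z_u \neq \nl$, the corresponding values are mutually independent and each uniform over $X_j'$; therefore $\mathcal{Z}$ is an i.i.d.\ uniform sample from $X_j'$, which establishes (ii) regardless of the realized value of $N$. Finally I would apply the lower-tail Chernoff bound to $N$: with $\mu = 2\tau$ and deviation parameter $\delta = \tfrac12$,
\[
\pr[N < \tau] = \pr\!\left[N < \left(1-\tfrac12\right)\mu\right] \;\leq\; \exp\!\left(-\frac{\mu\delta^2}{2}\right) \;=\; e^{-\tau/4}.
\]
Because $\tau = \Theta(\frac{k}{\veps^2}\log\frac{k}{\veps})$, the hidden constant can be chosen so that $e^{-\tau/4} \le \frac{1}{k}$, giving (i) with failure probability at most $\frac1k$; combined with (ii), both conditions hold with probability at least $1-\frac1k$.

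This argument is largely routine, so the only point requiring genuine care is the conditional-independence claim underlying (ii): one must verify that, conditioned solely on the count $N$ rather than on the specific positions of the non-null entries, the surviving samples are jointly i.i.d.\ uniform over $X_j'$. This follows from the independence of the $Z_u$ together with the fact that each non-null $Z_u$ has the \emph{identical} conditional law computed above. Everything else reduces to the one-line rejection-sampling identity and a standard concentration estimate, with the calibration $\eta = \frac{2\tau}{\gamma'}$ precisely arranged so that the expected yield $2\tau$ leaves a factor-two slack over the required $\tau$.
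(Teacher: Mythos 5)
Your proposal is correct and follows essentially the same route as the paper: both arguments observe that the rejection step makes each $Z_u$ equal to any fixed $x \in X_j'$ with probability exactly $\frac{\gamma'}{|X_j'|}$ (so the non-null value is uniform and independent of the Bernoulli$(\gamma')$ ``survival'' indicator), and both then apply a Chernoff bound to the binomial count with mean $\eta\gamma' = 2\tau$ to get at least $\tau$ non-null samples with probability $1-\frac{1}{k}$. The paper phrases the decoupling via an explicitly constructed equivalent family $Z_u'$ (coin toss first, then uniform value) while you verify the same identity by direct computation; the content is identical.
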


\begin{proof}
Note that a random variable $Z_u$ is equal to a specific element of $X_j'$ with probability equal to $\frac{\gamma'}{|X_j'|}$.
Therefore, it takes $\nl$ value with probability $1-\gamma'$.
Now consider a different set of iid random variables $Z_u'$, $1 \leq u \leq \eta$ as follows: each $Z_u$ tosses a coin with probability of Heads being $\gamma'$.
If we get Tails, it gets value $\nl$, otherwise it is equal to a random element of $X_j'$. 
It is easy to check that the joint distribution of the random variables $Z_u'$ is identical to that
of the random variables $Z_u$.
Thus, it suffices to prove the statement of the lemma for the random variables $Z_u'$.

Note that the non-null samples of $Z_1', ..., Z_{\eta}'$ are iid samples from $X_j'$. So, property (ii) of the lemma is satisfied by $\mathcal{Z}$. Now let us focus on property (i).
For this, we condition on the coin tosses of the random variables $Z_u'$.
Let $n'$ be the number of random variables which are not $\nl$.
($n'$ is a deterministic quantity because we have conditioned on the coin tosses).
Observe that the expected number of non-$\nl$ random variables is $\E[n'] = \gamma' \cdot \eta \geq 2 \tau$.
Therefore, with probability at least $(1 - \frac{1}{k})$ (using Chernoff-Hoeffding), the number of non-$\nl$ elements will be at least $\tau$. This completes the proof of the lemma.
\qed
\end{proof}

Let $B^{(\eta)}$ denotes the multi-set obtained by taking $\eta$ copies of each of the centers in $B$. 
Now observe that all the non-$\nl$ elements among $Y_1, \ldots, Y_{\eta}$ are elements of $\{O_1, \ldots, O_{\eta}\} \cup B^{(\eta)}$, and so the same must hold for $Z_1, \ldots, Z_{\eta}$. 
Moreover, since we only need a uniform subset of size $\tau$, $B_j'$ suffices instead of $B^{(\eta)}$.
This implies that in steps (5) of the algorithm ${\tt GoodCenters}$, we would have tried a set $S_j$ of size $\tau$ that is an iid sampled set from $X_j'$.

Combining the argument for index sets $I_s$ and $I_j$, we get that with probability at least $(1 - \frac{1}{k})^k$, there exists subsets $S_1, ..., S_k$ that the algorithm ${\tt GoodCenters}$ constructs in line (5) such that for every $i$, $S_i$ is an iid sample (with replacement) from $X_j's$. 
The sampling result of Fomin \etal and Ban \etal, that is Theorem~\ref{thm:fomin}, tells us that this is sufficient to show that the center set $C'' = \{c_1'', ..., c_k''\}$ so obtained in line (7) of ${\tt GoodCenters}$ satisfies eqn. (\ref{eqn:inter1}) with probability at least $\frac{\veps}{32}$.
Furthermore, this implies from Lemma~\ref{lemma:main-lem}, that $\Phi(C'', X) \leq (1+\veps) \cdot OPT$.
In summary, we get that in one iteration of the algorithm the probability of obtaining a $(1+\veps)$-approximate solution is at least $\left( 1 - \frac{1}{k}\right)^k \cdot \frac{\veps}{32}$. 
So a repeating $O(\frac{1}{\veps})$ times (as in line (2)) will give us a $(1 + \veps)$-approximate solution with high probability (probability at least $3/4$). This completes the proof of our main theorem~\ref{thm:main-theorem}.

\section{Conclusion and open problems}
In this work, we gave a 4-pass, polylog-space streaming PTAS for generalised binary $\ell_0$-rank-$r$ approximation problem and a 3-pass, polylog-space streaming PTAS for the constrained binary $k$-means problem (for $r$ and $k$ being constants independent of the problem size).
We do this by designing a PTAS for the latter problem and use a reduction from the former to the latter to obtain a PTAS also for the binary $\ell_0$-low rank approximation problem.
For the constrained binary $k$-means problem, we observe that as long as it is possible to obtain uniform samples from the target optimal clusters simultaneously, one can use previous sampling results of Ban \etal~\cite{ban} and Fomin \etal~\cite{fomin18} to obtain a PTAS. 
The $D^2$-sampling technique developed in the context of the standard constrained $k$-means problem~\cite{dx15,bjk,gjk} allows us to obtain something close to the above objective (of being able to sample uniformly from all target clusters) using a constant factor approximate solution for the binary $k$-means problem. This turns out to be sufficient to obtain a PTAS.

There are important questions that remain open.
The first question is whether one can design a PTAS with fewer passes? 
Furthermore, our constant pass algorithm requires polylogarithmic space. 
Note that this is mainly due to the fact that we need to guess the approximate size of each of the $k$ clusters. 
In the batch setting, Fomin~\etal~\cite{fomin18} have managed to get around the problem of guessing the size of each of the clusters by using an iterative {\it peeling} approach of Kumar \etal~\cite{kss}. 
However, their algorithm cannot be converted to a streaming PTAS that work in few passes.
So, an interesting question is whether a logspace algorithm is possible. 
In summary, the following open question would be nice to resolve:
\begin{quote}
{\it Does there exist a single-pass, logspace PTAS for the constrained binary $k$-means problem and the generalised binary $\ell_0$-rank-$r$ approximation problem?}
\end{quote}


\addcontentsline{toc}{section}{References}
\bibliographystyle{alpha}
\bibliography{paper}

\end{document}